\documentclass[a4paper,UKenglish,cleveref,thm-restate]{lipics-v2021}
\usepackage{float}
\usepackage{refcount}

\nolinenumbers

\newcommand{\Q}{\mathbb Q}
\newcommand{\R}{\mathbb R}
\newcommand{\Z}{\mathbb Z}

\renewcommand{\O}{\mathcal O}
\newcommand{\dbb}[2]{\mathcal B_*^{#1\times#1}(#2)}
\newcommand{\db}[2]{\mathcal B^#1(#2)}
\DeclareMathOperator{\cone}{cone}
\DeclareMathOperator{\intcone}{int.cone}
\DeclareMathOperator{\lcm}{lcm}
\newcommand{\rel}{{\bowtie}}
\newcommand{\graverbasis}{\mathcal G}
\newcommand{\signs}{\{<,=,>\}}
\newcommand{\soverline}[1]{\smash{\overline{#1}}}

\newcounter{theorem1}
\theoremstyle{claimstyle}
\newtheorem{claimin}{Claim}[theorem]
\newenvironment{proofof*}[1]{\setcounter{theorem1}{\value{theorem}}\setcounter{theorem}{\getrefnumber{#1}}\begin{proof}}{\end{proof}\setcounter{theorem}{\value{theorem1}}}
\newenvironment{proofof}[1]{\setcounter{theorem1}{\value{theorem}}\setcounter{theorem}{\getrefnumber{#1}}\begin{proof}[Proof of \cref{#1}]}{\end{proof}\setcounter{theorem}{\value{theorem1}}}

\makeatletter
\newenvironment{cdisplaymath}{\@fleqnfalse\begin{displaymath}}{\end{displaymath}}
\makeatother

\crefname{lemma}{Lemma}{Lemmata}
\Crefname{lemma}{Lemma}{Lemmata}
\crefname{ilp}{ILP}{ILPs}
\Crefname{ilp}{ILP}{ILPs}
\creflabelformat{ilp}{(#2#1#3)}
\crefname{constraint}{Constraint}{Constraints}
\Crefname{constraint}{Constraint}{Constraints}
\creflabelformat{constraint}{(#2#1#3)}

\title{Solving 4-Block Integer Linear Programs Faster Using Affine Decompositions of the Right-Hand Sides}

\titlerunning{Solving 4-Block ILPs Faster Using Affine Decompositions of the Right-Hand Sides}

\author{Alexandra Lassota}{Eindhoven University of Technology, Netherlands}{a.a.lassota@tue.nl}{https://orcid.org/0000-0001-6215-066X}{}

\author{Koen Ligthart}{Eindhoven University of Technology, Netherlands}{k.m.ligthart@tue.nl}{https://orcid.org/0009-0004-6823-5225}{}

\authorrunning{A. Lassota and K. Ligthart}

\Copyright{Alexandra Lassota and Koen Ligthart} 

\ccsdesc[500]{Mathematics of computing~Combinatorial optimization}
\ccsdesc[500]{Theory of computation~Complexity classes}

\keywords{integer linear programming, 4-block ILPs, fixed-parameter tractability, polyhedral optimization}

\hideLIPIcs

\begin{document}

\maketitle

\begin{abstract}
We present a new and faster algorithm for the 4-block integer linear programming problem, overcoming the long-standing runtime barrier faced by previous algorithms that rely on Graver complexity or proximity bounds. The 4-block integer linear programming problem asks to compute 
$\min\bigl\{c_0^\top x_0+c_1^\top x_1+\dots+c_n^\top x_n\bigm\vert Ax_0+Bx_1+\dots+Bx_n=b_0,\ Cx_0+Dx_i=b_i\ \forall i\in[n],\ (x_0,x_1,\dots,x_n)\in\Z_{\ge0}^{(1+n)k}\bigr\}$ for some $k\times k$ matrices $A,B,C,D$ with coefficients bounded by $\overline\Delta$ in absolute value. Our algorithm runs in time $f(k,\overline\Delta)\cdot n^{k+\O(1)}$, improving upon the previous best running time of $f(k,\overline\Delta)\cdot n^{k^2+\O(1)}$~[Oertel, Paat, and Weismantel (Math. Prog. 2024), Chen, Koutecký, Xu, and Shi (ESA 2020)]. Further, we give the first algorithm that can handle large coefficients in $A, B$ and $C$, that is, it has a running time that depends only polynomially on the encoding length of these coefficients. We obtain these results by extending the $n$-fold integer linear programming algorithm of Cslovjecsek, Koutecký, Lassota, Pilipczuk, and Polak (SODA 2024) to incorporate additional global variables $x_0$. The central technical result is showing that the exhaustive use of the vector rearrangement lemma of Cslovjecsek, Eisenbrand, Pilipczuk, Venzin, and Weismantel (ESA 2021) can be made \emph{affine} by carefully guessing both the residue of the global variables modulo a large modulus and a face in a suitable hyperplane arrangement among a sufficiently small number of candidates. This facilitates a dynamic high-multiplicy encoding of a \emph{faithfully decomposed} $n$-fold ILP with bounded right-hand sides, which we can solve efficiently for each such guess.
\end{abstract}

\section{Introduction}

We study the complexity of solving integer linear programs (ILPs) for a prominent class of ILPs that have a special block-structure in the constraint matrix, called \emph{4-block} ILPs. The general integer linear programming problem (in standard form) asks to compute an optimal solution $x$ to
\[
    \min\bigl\{c^\top x\bigm\vert Qx=b,x\in\Z_{\ge0}^n\bigr\},
\]
for some constraint matrix $Q\in\Z^{m\times n}$, right-hand side vector $b\in\Z^m$, and objective vector $c\in\Z^n$. This problem is known to be NP-hard for general $Q$, which has catalyzed broad research to understand the (parameterized) complexity of solving ILPs with restricted classes of constraint matrices; see, e.g.,~\cite{DBLP:journals/disopt/GavenciakKK22} for a survey. Among those, we focus on \emph{4-block} matrices, which are of the form
\[
    Q=\begin{pmatrix}
        A&B_1&B_2&\dots&B_n\\
        C_1&D_1&&&\\
        C_2&&D_2&&\\
        \vdots&&&\ddots&\\
        C_n&&&&D_n
    \end{pmatrix}
\]
for blocks $A\in\Z^{m\times s},B_i\in\Z^{m\times t},C_i\in\Z^{d\times s},D_i\in\Z^{d\times t}$ for $i\in[n]:=\{1,2,\dots,n\}$. Here, blocks of zeroes are omitted. The underlying 4-block ILP problem asks to solve an ILP which has a constraint matrix that is a 4-block matrix.

By splitting up $x,c$ and $b$ into $1+n$ bricks matching the block-structure of a 4-block matrix $Q$, solving the 4-block ILP problem corresponds to finding an optimal solution to
\begin{equation}
    \begin{aligned}
        \min\bigl\{c_0^\top x_0+c_1^\top x_1+\dots+c_n^\top x_n\bigm\vert&\ Ax_0+B_1x_1+\dots+B_nx_n=b_0,\\
        &\ C_ix_0+D_ix_i=b_i\ \forall i\in[n],\\
        &\ (x_0,x_1,\dots,x_n)\in\Z_{\ge0}^{s+nt}\bigr\}.
    \end{aligned}
    \label[ilp]{ilp:4-block}\tag{4B}
\end{equation}
We use $\overline\Delta$ to denote the greatest absolute value of the coefficients of $Q$ and use $\Delta\le\overline\Delta$ to denote the greatest absolute value of the coefficients in the diagonal blocks $D_i$. Furthermore, we let $L$ denote the encoding length of the largest absolute value of a number in \cref{ilp:4-block}. We say that a 4-block problem is uniform if $B_1=B_2=\dots=B_n,C_1=C_2=\dots=C_n$ and $D_1=D_2=\dots=D_n$ and say that a 4-block problem is $B$-uniform if $B_1=B_2=\dots=B_n$.
In the following, $\O_x(y)$ is used to denote $\O(f(x)\cdot y)$ for some (computable) function $f$, i.e., this hides multiplicative factors depending on the parameter(s) $x$. 

It is known that 4-block ILPs can be solved in polynomial time when the dimensions and coefficients of all blocks are bounded by a constant~\cite{DBLP:conf/esa/0009K0S20,DBLP:conf/ipco/HemmeckeKW10,DBLP:journals/mp/OertelPW24}, i.e., the problem is in XP when parameterized by $d,m,s,t,\overline\Delta$. Hemmecke, Köppe, and Weismantel~\cite{DBLP:conf/ipco/HemmeckeKW10} were the first to show this by providing an algorithm that solves a uniform 4-block ILP in slice-wise polynomial time $\O_{dmst\soverline\Delta}(n^{s2^m+\O(1)}\cdot L^{\O(1)})$. Subsequent works have improved the dependence of the block dimensions on the exponent of $n$ in the running time: Chen et al.~\cite{DBLP:conf/esa/0009K0S20} solve uniform 4-block ILPs in time $\O_{dmst\soverline\Delta}(n^{sm+\O(1)})$ and Oertel, Paat, and Weismantel~\cite{DBLP:journals/mp/OertelPW24} solve (non-uniform) 4-block ILPs in time\footnote{This running time can be obtained by using the algorithm in~\cite{DBLP:conf/soda/CslovjecsekEHRW21} and a strongly polynomial algorithm, such as Tardos'~\cite{DBLP:journals/ior/Tardos86} algorithm, to solve the LP relaxation under the assumption that $\overline\Delta$ is small.} $\O_{dmst\soverline\Delta}(n^{s\cdot\min\{s+2,m\}+\O(1)})$. Furthermore, attention has been given to the following special cases: Chen et al.~\cite{DBLP:conf/esa/0009K0S20} provide an algorithm that solves uniform 4-block ILPs in time $\O_{dmst\soverline\Delta}(n^{s(t^2+1)+\O(1)})$ when $A$ is the zero matrix and Chen, Chen, and Zhang~\cite{DBLP:journals/mp/ChenCZ24} provide an algorithm solving 4-block ILPs in time $\O_{dmst\soverline\Delta}(n^{\O(1)})$ as long as all $C_i$ are uniformly equal to some rank 1 matrix.

All previously mentioned algorithms share the common methodology of proving a bound~$u$ on either the proximity, which limits the distance between optimal LP relaxation solutions and their nearest optimal integer solutions, or the Graver complexity, which represents the granularity of the lattice of integer kernel elements of the constraint matrix. These bounds are used to reduce the 4-block ILP problem to solving a polynomial number of 4-block ILPs in which the domain of $x_0$ is restricted to a subset of the box $[0,2u]^s$. This makes it possible to eliminate the global variable vector $x_0$ by guessing its value, which is one of the $\O(u)^s$ integer vectors in the box\footnote{We note that under the strong assumption that all $C_i$ are equal to some rank 1 matrix $C$, the authors of~\cite{DBLP:journals/mp/ChenCZ24} instead enumerate the options for $Cx$ in a 1-dimensional box, similar to the situation where $s=1$.}. Finally, the resulting 4-block ILP problems without global variables ($s=0$), typically referred to as $n$-fold ILPs, can be solved efficiently in $\O_{dmst\soverline\Delta}(n^{\O(1)})$ time~\cite{DBLP:conf/soda/CslovjecsekEHRW21}.

Since Chen et al.~\cite{DBLP:conf/esa/0009K0S20} show that the Graver complexity (and, as an analogous construction can be used, also the proximity) of 4-block constraint matrices may be as large as $u=\Omega_{dmst\soverline\Delta}(n^{\min\{d,m,s,t\}})$, such box guessing strategies appear to be limited to a quadratic dependence of $\Omega(s\cdot\min\{d,m,s,t\})$ of the block dimensions on the exponent of $n$ in the running time. On the contrary, Eisenbrand and Rothvoss~\cite{eisenbrand2025parameterizedlinearformulationinteger} conjecture that the 4-block ILP problem is fixed-parameter tractable (FPT).

\begin{conjecture}[\cite{eisenbrand2025parameterizedlinearformulationinteger}]
    Solving 4-block ILP is FPT when parameterized by $d,m,s,t,\overline\Delta$, i.e., it can be solved in time $\O_{dmst\soverline\Delta}((nL)^{\O(1)})$.
    \label{conjecture:4-block}
\end{conjecture}

The lower bound by Chen et al.~\cite{DBLP:conf/esa/0009K0S20} shows that in order to answer \cref{conjecture:4-block} in the affirmative, different algorithmic techniques are required. In the context of \cref{conjecture:4-block}, Cslovjecksek et al.~\cite{DBLP:journals/theoretics/CslovjecsekKLPP25} show that uniform 4-block ILP is FPT w.r.t.\ $d,m,s,t,\overline\Delta$ if and only if uniform 4-block ILP is FPT w.r.t.\ $d,m,s,t,\Delta$ and $\overline\Delta\le n$: a proof of \cref{conjecture:4-block} would imply that solving uniform 4-block ILPs with moderately large coefficients in the non-diagonal blocks is also FPT. For this reason, algorithms that are parameterized by $\Delta$ instead of $\overline\Delta$ and of which the running time grows only polynomially with $L$ are of interest\footnote{Parameterization by the coefficient size $\Delta$ of the diagonal blocks $D_i$ is known to be necessary~\cite{DBLP:journals/disopt/ChenCZ22,DBLP:journals/ai/DvorakEGKO21}.}. Such results seem out of reach for a box-guessing approach as the Graver complexity and proximity of ILPs are typically at least as large as the coefficients of the constraint matrix.

Cslovjecksek et al.~\cite{DBLP:journals/theoretics/CslovjecsekKLPP25}, followed by Eisenbrand and Rothvoss~\cite{eisenbrand2025parameterizedlinearformulationinteger}, provide FPT algorithms able to deal with large coefficients outside the diagonal blocks for two intensively studied special cases of 4-block ILPs: \emph{$n$-fold} and \emph{2-stage-stochastic} ILPs, which we discuss next. As shown in \cref{fig:block-structured}, these have no global variables, i.e., $s=0$, or no global constraints, i.e., $m=0$, respectively. Yet, their expressiveness is sufficient to be applied in the context of graph problems~\cite{DBLP:conf/isaac/FellowsLMRS08,DBLP:journals/dam/FialaGKKK18}, scheduling~\cite{DBLP:journals/mp/JansenKMR22,DBLP:journals/scheduling/KnopK18}, computational social choice~\cite{bartholdi1989voting,DBLP:journals/teco/KnopKM20}, and machine learning~\cite{ermolieva2023connections}. See also~\cite{DBLP:journals/disopt/GavenciakKK22}.

\begin{figure}[H]
    \centering
    \vspace{-5pt}
    \begin{tabular}{p{0.5\linewidth}@{\hskip0pt}p{0.5\linewidth}}
        \normalsize
        \begin{cdisplaymath}
            \begin{pmatrix}
                B_1&B_2&\dots&B_n\\
                D_1&&&\\
                &D_2&&\\
                &&\ddots&\\
                &&&D_n
            \end{pmatrix}
        \end{cdisplaymath}
        \vspace{-15pt} &
        \normalsize
        \begin{cdisplaymath}
            \begin{pmatrix}
                C_1&D_1&&&\\
                C_2&&D_2&&\\
                \vdots&&&\ddots&\\
                C_n&&&&D_n
            \end{pmatrix}
        \end{cdisplaymath}
        \vspace{-15pt}\\
        \subcaption{$n$-fold structure ($s=0$).} &
        \subcaption{2-stage stochastic structure ($m=0$).}
    \end{tabular}
    \vspace{-15pt}
    \caption{Displayed are two prominent special cases of 4-block matrices.}
    \label{fig:block-structured}
\end{figure}

Current state-of-the-art FPT algorithms for these special cases in terms of their polynomial dependence of $n$ solve $n$-fold ILPs in time $\O_{dm\soverline\Delta}((nt)^{1+o(1)})$~\cite{DBLP:conf/soda/CslovjecsekEHRW21} and 2-stage stochastic ILPs in time $\O_{st\soverline\Delta}((nd)^{1+o(1)})$~\cite{DBLP:conf/esa/CslovjecsekEPVW21}. The dependence on $\overline\Delta$ in the running times of these algorithms is a result of the use of proximity bounds.

Cslovjecksek et al.~\cite{DBLP:journals/theoretics/CslovjecsekKLPP25} avoid the use of such bounding arguments and solve $B$-uniform\footnote{Note that due to the NP-hardness of the subset sum problem, the $B$-uniformity in the setting of large coefficients is necessary.} $n$-fold ILP in time $\O_{dt\Delta}(m\cdot(nL)^{\O(1)})$ and 2-stage stochastic ILP feasibility in time $\O_{ds\Delta}(t\cdot(nL)^{\O(1)})$. Eisenbrand and Rothvoss~\cite{eisenbrand2025parameterizedlinearformulationinteger} extend the former result by providing an algorithm that optimizes over a 2-stage stochastic ILP in time $\O_{dst\Delta}((nL)^{\O(1)})$ and, using similar arguments, show how to find an \emph{almost solution} to a 4-block ILP, that violates the $m$ global constraints by an additive error of at most $\O_{dt\soverline\Delta}(m)$, in time $\O_{dst\Delta}(m\cdot(nL)^{\O(1)})$.

\subsection{Contributions}

We extend the $B$-uniform $n$-fold ILP algorithm from~\cite{DBLP:journals/theoretics/CslovjecsekKLPP25} to obtain an algorithm that solves $B$-uniform 4-block ILPs with an improved running time dependence of $s+\O(1)$ in the exponent of $n$, improving over previous quadratic dependencies. Furthermore, this algorithm is the first whose running time is polynomial in the encoding lengths of the coefficients of $A,B$ and $C_i$. With this, we overcome both limitations inherent to the enumeration of vectors in an $s$-dimensional box with side length given by the Graver complexity or proximity.

\begin{restatable}{theorem}{theoremuniformfourblockxp}
    A $B$-uniform 4-block ILP can be solved in time $\O_{dst\Delta}(n^{s+\O(1)}\cdot m\cdot L^{\O(1)})$.
    \label{theorem:uniform-4-block-xp}
\end{restatable}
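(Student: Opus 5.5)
The plan is to reduce the problem to polynomially many $B$-uniform $n$-fold ILPs with bounded right-hand sides, following the abstract, and to solve each one with the high-multiplicity technique of Cslovjecsek et al.~\cite{DBLP:journals/theoretics/CslovjecsekKLPP25}.

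\textbf{Eliminating $x_0$ through $C_ix_0$.} Fix any value of $x_0$. Each local block $i$ then becomes the system $D_ix_i=b_i-C_ix_0$ with $x_i\ge0$ integral. Its feasible contributions $Bx_i$, together with their costs, form a set that we decompose using the vector rearrangement lemma of Cslovjecsek, Eisenbrand, Pilipczuk, Venzin, and Weismantel~\cite{DBLP:conf/esa/CslovjecsekEPVW21}. The lemma lets us write a solution $x_i$ as a sum of a bounded number of ``small'' local solutions, each with right-hand side bounded in terms of $d,t,\Delta$, plus elements of the integer cone of kernel-type vectors of $D_i$.

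\textbf{Making the decomposition affine.} The key step is to make this decomposition depend \emph{affinely} on $x_0$. We guess the residue $x_0 \bmod M$ for a modulus $M=f(d,t,\Delta)$ large enough (e.g., a multiple of all relevant subdeterminants of the $D_i$); there are $M^s$ such guesses. Write $x_0=r+My$ with $y\in\Z^s_{\ge0}$. For fixed $r$, we want the way each $b_i-C_ir-MC_iy$ decomposes into bounded pieces to be governed by an integral affine map in $y$. This should hold as long as $y$ stays inside one face of a hyperplane arrangement in $\R^s$. The arrangement's hyperplanes come from, for each $i$, the sign conditions $\signs$ of finitely many linear forms in $y$ that determine which cone or basis of $D_i$ the vector $b_i-C_ir-MC_iy$ lies in. There are $\O_{dt\Delta}(n)$ hyperplanes in $\R^s$, hence $\O_{dt\Delta}(n)^s$ faces, which we enumerate. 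This enumeration is exactly where the factor $n^{s}$ comes from.

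\textbf{Solving one guess.} Fix a residue $r$ and a face $F$. Inside $F$, each block's right-hand side decomposes faithfully into bounded-size pieces whose multiplicities are affine in $y$. We then form a single ILP whose variables are $y$, with $y$ constrained to $F$, together with, for each bounded local right-hand side type, the number of bricks using each local configuration. Only $f(d,t,\Delta)$ types exist, so the problem becomes a high-multiplicity $B$-uniform $n$-fold instance whose brick counts are affine in $y$. The global constraints then read $Ar+MAy+B\cdot(\text{sum of local pieces})=b_0$. Because $B$ is uniform, the contribution of the local pieces is $B$ applied to an aggregated vector, so the whole system has only $m$ rows of possibly large coefficients. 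We solve it with the dynamic high-multiplicity encoding of~\cite{DBLP:journals/theoretics/CslovjecsekKLPP25}: treat $y$ as $s$ additional variables and the aggregated counts as a small number of variables, and apply Lenstra-type ILP solving in fixed dimension $f(d,s,t,\Delta)$ with polynomial dependence on $L$ and linear dependence on $m$. We then take the best solution over all guesses.

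\textbf{Main obstacle.} The main obstacle is the affinity step. We must show that, after fixing $x_0 \bmod M$ and a face of the arrangement, exhaustively applying the rearrangement lemma gives a decomposition whose piece counts are affine in $y$ with nonnegativity guaranteed on the face. We must also show this loses no optimal solutions, i.e., that the decomposition is faithful: every solution with $y$ in the face maps to a solution of the compressed ILP and back, with the same cost. I would first try to prove this for a single block by picking a fixed basis of $D_i$ per face and using Cramer's rule together with the modulus $M$ to guarantee integrality. I would then bound the number of distinct linear forms per block by $f(d,t,\Delta)$, which gives $\O_{dt\Delta}(n)$ hyperplanes overall.
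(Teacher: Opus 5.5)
\textbf{There is a genuine gap: the main obstacle you defer is the paper's \cref{lemma:dynamic-decomposition}, and the route you sketch for it fails.} Your architecture does match the paper. You guess $x_0 \bmod M$ and a face of the arrangement of $\O_{dt\Delta}(n)$ hyperplanes pulled back to $\R^s$ through $x_0\mapsto b_i-C_ix_0$. You use decompositions that are affine on each such domain, and you solve a configuration program with $s+\O_{dt\Delta}(1)$ integer variables by Kannan's algorithm.

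The problem is with splitting $b$ along the columns of one fixed basis $U$ of $D_i$ via Cramer's rule. That decomposition is not \emph{faithful}. Faithfulness quantifies over all solutions of $D_ix=b$, and a solution supported on another basis need not contain a solution for a column of $U$. For example, with $D=(2\ 3)$ and $b=6k$, the solution $(0,2k)$ contains no piece with right-hand side $2$.

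The paper proceeds differently:
\begin{itemize}
\item It collects all bases $\mathcal U$ whose cones contain $b$.
\item It picks linearly independent generators $V$ of $\bigcap_{U\in\mathcal U}\cone U$ with $F\subseteq\cone V$.
\item It scales them to $w_j=\Psi v_j\in\bigcap_{U\in\mathcal U}\intcone U$ via \cref{lemma:intersection-integer-cone}.
\item It proves faithfulness by induction. Any solution $x$ of $Dx=p$ dominates $\frac{1}{t+1}u$ for some vertex $u$. Because $p$ is a strictly positive combination of the $w_j$, the basis of $u$ still lies in $\mathcal U$. Hence a solution for $w_{i^*}$ can be peeled off from $x$ whenever the multiplier of $w_{i^*}$ is at least $t+1$.
\end{itemize}

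This construction forces several ingredients that are missing from your plan.
\begin{itemize}
\item \emph{Hyperplanes.} Sign conditions for bases of $D_i$ are not enough. The arrangement must contain $(\overline V^{-1}b)_j=0$ and $(\overline V^{-1}b)_j=t\Psi$ for every basis $\overline V$ of vectors of norm at most $\Phi$. Only then are $V$, the set of large multipliers, and hence the root $q$ constant on the face. Without the threshold hyperplanes the piece counts are only piecewise affine, of the form $\max\{0,\lambda_j-\gamma_j\}$.
\item \emph{Modulus.} $M$ must fix every multiplier $\lambda_j$ modulo $t$, not merely make it integral, so that multipliers below the threshold are fully determined. This requires $M$ to be a multiple of $t\Psi\cdot\lcm\{1,\dots,\Phi^d d^{d/2}\}$. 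Subdeterminants of $D_i$ do not provide this.
\item \emph{Per-brick variables.} Since $c_i$ and $D_i$ differ across bricks, you cannot collapse bricks into $f(d,t,\Delta)$ types without losing the objective. The paper keeps per-brick multiplicity variables. Their constraint matrix is totally unimodular, because each variable lies in one decomposition row and one aggregation row. So they can be relaxed to continuous variables, and Kannan's algorithm branches only on $s+2^{\O(d\Delta)^d\cdot t}$ integer variables.
\end{itemize}
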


The constant hidden in the $\O_{dst\Delta}$ is triply exponential in $d$, which matches that of the $n$-fold algorithm by Cslovjecksek et al.~\cite{DBLP:journals/theoretics/CslovjecsekKLPP25}.

The algorithm underlying \cref{theorem:uniform-4-block-xp} relies on constructing \emph{faithful decompositions} of right-hand side vectors $b_i$ that play a central role in the $n$-fold ILP algorithm of Cslovjecksek et al.~\cite{DBLP:journals/theoretics/CslovjecsekKLPP25}. In this setting, a faithful decomposition may be interpreted as a decomposition of the local bricks of an $n$-fold ILP into more bricks that have only a bounded number of distinct and small right-hand side vectors. We extend their approach by computing faithful decompositions in a dynamic and affine manner for varying $b_i$. This is made precise in \cref{sec:preliminaries,sec:dynamic-faithful-decomposition}. To provide such decompositions, we exploit the constructive nature of the proof of Cslovjecksek et al.~\cite{DBLP:conf/esa/CslovjecsekEPVW21} of their variant of a vector rearrangement lemma by Klein~\cite{DBLP:journals/mp/Klein22}. The affine faithful decompositions are used to encode a domain-restricted 4-block ILP for a limited number of guesses of the domain of the global variables. The reduced dependence on the exponent of $n$ in \cref{theorem:uniform-4-block-xp} relies on the fact that the number of faces in an arrangement of $N$ hyperplanes in $\R^s$ is bounded by $\O(N)^s$~\cite{DBLP:series/eatcs/Edelsbrunner87,grunbaum1967convex} and can be computed in $\O(N)^s$ time~\cite{DBLP:series/eatcs/Edelsbrunner87,DBLP:journals/siamcomp/EdelsbrunnerOS86,DBLP:journals/siamcomp/EdelsbrunnerSS93}.

We also obtain an improved algorithm for (non-$B$-uniform) 4-block ILPs with bounded coefficients through a reduction to the uniform case by Dvořák et al.~\cite{DBLP:journals/ai/DvorakEGKO21}. By additionally applying Frank and Tardos'~\cite{DBLP:journals/combinatorica/FrankT87} coefficient reduction technique, we obtain the slice-wise strongly polynomial-time algorithm in \cref{theorem:4-block-strongly-xp}.

\begin{restatable}{theorem}{theoremfourblockstronglyxp}
    A 4-block ILP can be solved in time $\O_{dms\soverline\Delta}(n^{s+\O(1)}\cdot t)$.
    \label{theorem:4-block-strongly-xp}    
\end{restatable}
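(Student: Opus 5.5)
We derive \cref{theorem:4-block-strongly-xp} from \cref{theorem:uniform-4-block-xp} in three steps: uniformize the blocks, then strip the dependence on $L$, then account for the factor $t$.

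\textbf{Uniformization.} Since all coefficients are bounded by $\overline\Delta$, there are at most $(2\overline\Delta+1)^{mt}$ distinct matrices $B_i$, at most $(2\overline\Delta+1)^{ds}$ distinct $C_i$, and at most $(2\overline\Delta+1)^{dt}$ distinct $D_i$. Following Dvořák et al.~\cite{DBLP:journals/ai/DvorakEGKO21}, we make the instance uniform by padding. We enumerate all triples $(B^{(j)},C^{(j)},D^{(j)})$, $j\in[K]$, with $K=\O_{dms\soverline\Delta}(1)$. Each brick is given the variable vector $x_i=(x_i^{(1)},\dots,x_i^{(K)})\in\Z_{\ge0}^{Kt}$, the common blocks
\[
\tilde B=(B^{(1)}\ \cdots\ B^{(K)}),\quad \tilde D=\operatorname{diag}(D^{(1)},\dots,D^{(K)}),\quad \tilde C=\begin{pmatrix}C^{(1)}\\ \vdots\\ C^{(K)}\end{pmatrix},
\]
and the right-hand side $\tilde b_i$ that equals $b_i$ in block $j$, where $j$ is the type of brick $i$.

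Block $j'\ne j$ must be forced to vanish, and this is the delicate point: its local constraint reads $C^{(j')}x_0+D^{(j')}x_i^{(j')}=0$, which need not force $x_i^{(j')}=0$. So I would instead follow the standard trick of copying $x_0$ locally. Each brick gets a local copy $y_i$ of $x_0$ (an $s$-dimensional vector) together with a per-type copy of the global constraints $y_i=x_0$ carried through $\tilde C$. Separately, we add slack variables with large objective penalty, or set the objective coefficients of the inactive blocks to zero and bound them with $0\le x_i^{(j')}\le 0$ via an additional identity row with right-hand side $0$. The cleanest version uses $\tilde D$ with an extra identity block per type: the rows $x_i^{(j')}+z=0$ with $z\ge0$ force $x_i^{(j')}=0$ for inactive types, and for the active type these rows get right-hand side $M$ with a free slack. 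This raises $d$ and $t$ by factors depending only on $K$ and keeps $\Delta,\overline\Delta$ bounded. The new parameters $d',t'$ are then functions of $d,m,s,t,\overline\Delta$, while $s$ stays unchanged, so the exponent remains $s+\O(1)$.

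\textbf{Removing $L$.} We apply Frank and Tardos~\cite{DBLP:journals/combinatorica/FrankT87} to the objective, replacing $c$ by $\bar c$ with $\|\bar c\|_\infty\le 2^{\O(N^3)}$, where $N$ is the number of variables. This preserves the optimal solutions, given a bound on $\|x\|_1$. The right-hand side must also be reduced. Here I would use the standard proximity/strongly polynomial route. First solve the LP relaxation in strongly polynomial time with Tardos' algorithm~\cite{DBLP:journals/ior/Tardos86}, since all matrix entries are bounded. Then use a proximity bound for 4-block ILPs (polynomial in $n$ with parameter-dependent exponent, or via Cook et al.) to shift $x$ around the LP optimum. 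This reduces $b$ to encoding length polynomial in $n$ and the parameters, so that $L^{\O(1)}$ becomes $n^{\O(1)}$ and is absorbed. I expect this step to be the main obstacle: the proximity for 4-block matrices can be as large as $n^{\min\{d,m,s,t\}}$, but its logarithm is only $\O(\log n)$ times the parameters, so the encoding length stays polynomial, which suffices.

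\textbf{Accounting for $t$.} The factor $t$ appears linearly because the variables within a brick that share the same column of $(B_i;D_i)$ (and the same $C$-free structure) can be merged. Since the columns take at most $(2\overline\Delta+1)^{m+d}$ distinct values, each brick has $\O_{dm\soverline\Delta}(1)$ column types. Within each type only the variable of minimum cost matters, and it can be found in $\O(t)$ time per brick. This is a reading pass of $\O(nt)$, after which $t$ is bounded by the parameters. Combining the three steps with \cref{theorem:uniform-4-block-xp} yields the running time $\O_{dms\soverline\Delta}(n^{s+\O(1)}\cdot t)$.
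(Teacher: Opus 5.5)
\textbf{Genuine gap: the uniformization step.} The construction in your first step does not encode the original ILP. With a common $\tilde C=(C^{(1)};\dots;C^{(K)})$, every brick of type $j$ also contains the rows $C^{(j')}x_0+D^{(j')}x_i^{(j')}=0$ for all $j'\ne j$. These rows restrict $x_0$. Your ``cleanest version'' forces $x_i^{(j')}=0$, which turns them into $C^{(j')}x_0=0$. That constraint is absent from the original instance and can cut off every optimal or even every feasible solution; for example, it forces $x_0=0$ as soon as some $C^{(j')}$ has full column rank.

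The local copy $y_i$ of $x_0$ does not help, because the inactive rows then impose the same restriction on $y_i=x_0$. To neutralize these rows you would need free slacks in them that are switched off only for the active type. That means big-$M$ right-hand sides backed by an a priori bound on $\|C^{(j')}x_0\|_\infty$, which you never supply.

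There is also a counting problem. $K$ ranges over $(2\overline\Delta+1)^{mt}$ possible matrices $B_i$ (and similarly for $D_i$). So $K$ is not $\O_{dms\soverline\Delta}(1)$ unless the column merging of your third step is done first, and your new $d',t'$ must not depend on $t$.

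\textbf{The missing observation.} \cref{theorem:uniform-4-block-xp} only needs $B$-uniformity; $C_i$ and $D_i$ may remain brick-specific, so $C$ never has to be touched. After merging columns with equal $(B_i;D_i)$-column (keeping the cheapest), index all possible local columns by pairs $(p,q)\in\db m{\overline\Delta}\times\db d{\overline\Delta}$ and give every brick all of them with $B$-column $p$. Then kill the types absent from brick $i$ with a single extra local row: coefficient $1$ on those columns, $0$ elsewhere, a zero row in $C_i$, and right-hand side $0$. This is \cref{lemma:reduction-to-uniform}. It gives $t'=(2\overline\Delta+1)^{m+d}$ and $d'=d+1$ in $\O(nt)$ time.

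\textbf{Secondary issue in your second step.} You apply Cook et al.\ and Frank--Tardos to the 4-block ILP itself. Frank--Tardos only preserves comparisons between solutions at bounded $\ell_1$-distance, so the proximity box must actually be imposed, including upper bounds on $x_0$. Inside the block structure this has to be done without increasing $s$, for instance via extra local rows in one brick with a non-uniform $C_1$; you skip this. The merging must also precede the box, since merging is invalid for bounded variables.

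The paper sidesteps all of this. It runs Tardos' LP algorithm, the Cook et al.\ box, and Frank--Tardos on each configuration ILP \cref{ilp:decomposed-4-block}, which is a generic ILP solved by Kannan's algorithm, where box constraints can be added freely. Your route, once the box is encoded properly, would instead use \cref{theorem:uniform-4-block-xp} as a black box on an instance with polynomially bounded $L$. That is a legitimate alternative, but it is not carried out in your proposal.
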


We present the preliminaries in \cref{sec:preliminaries}. In \cref{sec:algorithm}, we discuss how affine faithful decompositions can be used to obtain an algorithm to solve 4-block ILPs. We prove the lemma that constructs the used decompositions in \cref{sec:dynamic-faithful-decomposition}. In the latter two respective sections, we give a high level overview of the used techniques before providing detailed proofs.

\section{Preliminaries}
\label{sec:preliminaries}

When $A$ is a matrix or scalar and $S\subseteq\R^d$ is a set of vectors, we use $AS$ to denote the image $\{A\cdot x\vert x\in S\}$ of $S$ under $A$. Similarly, we use $S+v$ to denote the translated set~$\{x+v\vert x\in S\}$ for an offset vector $v\in\R^d$. For notational convenience, we identify a set of vectors $S\subseteq\R^d$ with the $d\times|S|$ matrix of which the columns are the vectors in $S$. The \emph{cone} generated by a set of vectors $S=\{v_1,\dots,v_n\}$ is the set of nonnegative combinations of vectors in $S$ and the \emph{integer cone} generated by these vectors is the set of nonnegative integral combinations. That is,
\[
    \cone S=\{\lambda_1\cdot v_1+\dots+\lambda_n\cdot v_n\ \vert\ \lambda\in\R_{\ge0}^n\}=S\R_{\ge0}^S\text{ and }\intcone S=S\Z_{\ge0}^S.
\]
Here, we use $K^I$ for sets $K$ and $I$ to denote the set of vectors $x\in K^I$ that are indexed by $i\in I$ and have coordinates $x_i\in K$. As shorthand, we use $\db d\Delta=\{x\in\Z^d:\|x\|_\infty\le\Delta\}$ to denote the discrete radius $\Delta$ hypercube centered at the origin. Further, $\dbb d\Delta=\{F\in\{-\Delta,-\Delta+1,\dots,\Delta\}^{d\times d}:\det F\ne0\}$ denotes the set of all invertible $d\times d$ matrices $F$ with integral entries of absolute value bounded by $\Delta$, or equivalently, the set of all bases $F\subseteq\db d\Delta$ of $\R^d$. We say that a vector $x\in\R^d$ is \emph{conformal} to a vector $y\in\R^d$ if $x_iy_i\ge0$ and $|x_i|\le|y_i|$ for all $i\in[d]$. We denote this by $x\sqsubseteq y$. 

We use a $(d+1)$-dimensional point $(a_h,\beta_h)\in\R^d\times\R$ to describe a \emph{hyperplane} $h$ that represents the set $\{x\in\R^d:a_h^\top x=\beta_h\}$. Note that we allow $a_h=0$, i.e., degenerate hyperplanes that are either the whole space or the empty set. A (partially open) \emph{polyhedron}~$P$ is defined by its relation to a finite number of hyperplanes $H_\le,H_<$ and can be written as $P=\{x\in\R^d:a_h^\top x\le\beta_h\ \forall h\in H_\le,a_h^\top x<\beta_h\ \forall h\in H_<\}$. Similarly, a collection of hyperplanes $H\subseteq\R^{d+1}$ induces an \emph{arrangement} of \emph{faces} $\mathcal F_H$ each defined by a \emph{position vector} $\rel$ that specifies the position of the face w.r.t.\ each hyperplane. Here, the (potentially empty) face $F_\rel$ with position vector $\rel\in\signs^H$ is the partially open polyhedron given by $F_\rel=\{x\in\R^d:a_h^\top x\mathrel{\rel_h}\beta_h\ \forall h\in H\}$, and the arrangement induced by $H$ consists of all faces $\mathcal F_H=\{F_\rel\ \vert\ \rel\in\signs^H\}$. It is known that the number of distinct faces in a hyperplane arrangement in $d$ dimensions is bounded by $\O(|H|)^d$~\cite{DBLP:series/eatcs/Edelsbrunner87,grunbaum1967convex}. Furthermore, Edelsbrunner, O'Rourke, and Seidel~\cite{DBLP:journals/siamcomp/EdelsbrunnerOS86} have shown that arrangements can be computed within a similar amount of time.

\begin{theorem}[\cite{DBLP:series/eatcs/Edelsbrunner87,DBLP:journals/siamcomp/EdelsbrunnerOS86,DBLP:journals/siamcomp/EdelsbrunnerSS93}]
    Let $H$ be a collection of hyperplanes in $d$ dimensions. The arrangement $\mathcal F_H$ induced by $H$ can be computed in $\O(|H|)^d$ time.
    \label{theorem:edelsbrunner}
\end{theorem}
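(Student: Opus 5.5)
This theorem is cited from the computational geometry literature rather than proved here, so I would reconstruct the classical incremental construction of Edelsbrunner, O'Rourke, and Seidel. The plan has three steps: handle degenerate hyperplanes, insert the hyperplanes one at a time while maintaining the face lattice, and bound the cost of each insertion.

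First, the degenerate hyperplanes. A hyperplane with $a_h=0$ is either all of $\R^d$ (when $\beta_h=0$) or empty. In the first case it forces $\rel_h$ to be $=$ for every nonempty face. In the second case it forces $\rel_h$ to be $<$ or $>$ according to the sign of $\beta_h$. Either way it only fixes one coordinate of every position vector, so I would remove these hyperplanes and patch the position vectors afterwards at cost $\O(|H|)$ per face.

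For the remaining hyperplanes, I represent the arrangement by its incidence lattice: every nonempty face $F_\rel$, stored with its position vector and its incidences to faces of one dimension lower and higher. I insert the hyperplanes $h_1,\dots,h_N$ one at a time. When $h_{i+1}$ is added, every face it meets is split into three faces, namely the parts on the sides $<$ and $>$ and the intersection $=$, and the lattice is updated locally.

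The main obstacle is bounding the cost of one insertion; this is exactly the zone theorem. The zone of $h$ in an arrangement of $i$ hyperplanes in $\R^d$ consists of the faces meeting $h$, and their total complexity (including all subfaces of the cells they bound) is $\O(i^{d-1})$. I would first try to invoke this theorem directly, with the correct proof of Edelsbrunner, Seidel, and Sharir~\cite{DBLP:journals/siamcomp/EdelsbrunnerSS93}. Given the zone bound, the zone is traversed from one starting face found by walking along a line inside $h$, and then by a graph search over the lattice, in time proportional to its complexity. So insertion $i$ costs $\O(i^{d-1})$, and summing over $i\le N$ gives $\O(N^d)$.

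To finish, the total number of faces is $\sum_{k\le d}\binom{N}{k}2^k=\O(N)^d$. Writing down the position vectors costs an extra factor $N$, which is absorbed into $\O(|H|)^d$ up to adjusting the constant in the base; the position vectors can also be maintained implicitly during the construction. This yields the stated $\O(|H|)^d$ running time.
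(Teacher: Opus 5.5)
Your plan is the construction behind the cited result: the incremental algorithm of Edelsbrunner, O'Rourke and Seidel, with each insertion charged to the zone theorem in the corrected form of Edelsbrunner, Seidel and Sharir. The paper gives no argument beyond the citation, so this is the intended route.

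One step of your final accounting is false. An extra factor $N=|H|$ cannot be absorbed into $\O(N)^d$ by enlarging the constant in the base. The inequality $N\cdot(cN)^d\le(c'N)^d$ is equivalent to $N\le(c'/c)^d$, which fails for fixed $d$ as $N\to\infty$. In fact a simple arrangement has $\Theta(N^d)$ nonempty faces, each with its own position vector of length $N$. Listing all position vectors explicitly therefore takes $\Omega(N^{d+1})$ time.

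So the $\O(|H|)^d$ bound is a statement about the incidence-lattice representation, in which position vectors are not written out. You should drop the absorption claim and rely on your second alternative, keeping the position vectors implicit. The paper's remark after the theorem, that the algorithm can provide the position vectors, has to be read the same way. This is harmless for the application: the proof of \cref{theorem:uniform-4-block-xp} reads the entire position vector of each guessed face anyway. Even an explicit listing in $|H|\cdot\O(|H|)^s$ time therefore fits within its $n^{s+\O(1)}$ bound.

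A smaller point concerns the starting face of an insertion. You cannot simply walk along a line inside $h$. Such a line does not lie in the $1$-skeleton of the current arrangement, so you would first have to locate one of its points. The standard fix is to walk along a line of the $1$-skeleton that crosses $h$:
\begin{itemize}
\item Insert first $d$ hyperplanes with linearly independent normals. You may assume these exist after reducing to the case where all normals span $\R^d$.
\item The $d$ lines obtained as intersections of $d-1$ of these hyperplanes have directions spanning $\R^d$, so at least one of them is not parallel to $h$.
\item That line is cut by the current hyperplanes in $\O(i)$ points. Walking along it locates an edge or vertex meeting $h$ in $\O(i)$ time.
\end{itemize}
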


In particular, the algorithm of Edelsbrunner et al.\ can provide the at most $\O(|H|)^d$ many position vectors that correspond to nonempty faces.

A central concept in the $n$-fold algorithm by Cslovjecksek et al.~\cite{DBLP:journals/theoretics/CslovjecsekKLPP25} and the algorithm underlying \cref{theorem:uniform-4-block-xp} is that of \emph{faithful decompositions} of right-hand side vectors $b$ into smaller vectors, which captures the fact that solutions to ILPs in standard form may be decomposed as sums of solutions to ILPs with bounded right-hand sides.

\begin{definition}
    A faithful decomposition of $b\in\Z^d$ w.r.t.\ $D\in\Z^{d\times t}$ of order $\Xi$ is a multiset of vectors $b_1,\dots,b_\ell\in\Z^d$ such that
    \begin{itemize}
        \item $b$ decomposes as $b=b_1+\dots+b_\ell$;
        \item each $b_i\sqsubseteq b$ for $i\in[\ell]$;
        \item each $b_i$ has norm bounded by $\|b_i\|_\infty\le\Xi$ for $i\in[\ell]$; and
        \item any solution $x\in\Z_{\ge0}^t$ to $Dx=b$, can be decomposed as $x=x_1+\dots+x_\ell$ where each $x_i\in\Z_{\ge0}^t$ is a solution to $Dx_i=b_i$ for $i\in[\ell]$;
    \end{itemize}
    \label{def:faithful-decomposition}
\end{definition}

A multiset consisting of elements in a set $S$ may be interpreted as an integer vector~$f\in\Z_{\ge0}^S$ where $f_s$ denotes the multiplicity with which $s\in S$ occurs in the multiset. For this reason, we identify multisets with integer vectors. Let $X_{b'}^D=\{x\in\Z_{\ge0}^t:Dx=b'\}$ be the set of solutions for a given right-hand side $b'$. Translating \cref{def:faithful-decomposition} reveals that a faithful decomposition of $b$ w.r.t.\ $D$ of order $\Xi$ is an integer vector $f\in\Z^R$ for some $R\subseteq\db d\Xi$ such that
\begin{itemize}
    \item $b=\sum_{b'\in R}f_{b'}\cdot b'$;
    \item $b'\sqsubseteq b$ for $b'\in R$;
    \item for all $x\in X_b^D$, there exists a multiset $p_{b'}\in\Z_{\ge0}^{X_{b'}^D}$ for each $b'\in R$ such that $x=\sum_{b'\in R}\sum_{x'\in X_{b'}^D}(p_{b'})_{x'}\cdot x'$ and $\sum_{x'\in X_{b'}^D}(p_{b'})_{x'}=f_{b'}$ for $b'\in R$.
\end{itemize}
This high-multiplicity encoding is necessary as low-order faithful decompositions need to have a multiset cardinality that is exponential in the encoding length of $b$. Cslovjecksek et al.~\cite{DBLP:journals/theoretics/CslovjecsekKLPP25} provide an algorithm (Lemma 5.12) that computes a (high-multiplicity encoding of a) faithful decomposition of $b\in\Z^d$ w.r.t.\ a $D\in\{-\Delta,-\Delta+1,\dots,\Delta\}^{d\times t}$ of order $2^{(d\Delta)^{\O(d)}}$ in time $\O_{dt\Delta}(\log^{\O(1)}\|b\|_\infty)$. We show in \cref{lemma:dynamic-decomposition} that faithful decompositions can be computed as an \emph{affine} function of $b$ as long as $b$ belongs to a well-behaved domain, which allows us to extend the algorithm by Cslovjecksek et al.~\cite{DBLP:journals/theoretics/CslovjecsekKLPP25} to solve 4-block ILPs. We note that the conformality of the decomposition in \cref{def:faithful-decomposition} is not algorithmically exploited in our algorithm, but is included in \cref{def:faithful-decomposition} to match the definitions of~\cite{DBLP:journals/theoretics/CslovjecsekKLPP25}.

To prove \cref{theorem:uniform-4-block-xp}, we will need \cref{lemma:solution-decomposition} from~\cite{DBLP:journals/theoretics/CslovjecsekKLPP25}, which shows that an arbitrarily large solution to an ILP with bounded coefficients can be decomposed into a small base solution plus small elements from the \emph{Graver basis} of the constraint matrix. The Graver basis~$\graverbasis(D)\subseteq\Z^t\setminus\{0\}$ of an integer matrix $D\in\Z^{d\times t}$ is the set of conformally minimal nonzero integral kernel elements of $D$. It is known that any Graver basis element $g\in\graverbasis(D)$ is bounded by $\|g\|_\infty\le\|g\|_1\le\O(d\Delta)^d$ when $D\in\{-\Delta,-\Delta+1,\dots,\Delta\}^{d\times t}$~\cite{DBLP:conf/icalp/EisenbrandHK18}.

\begin{lemma}[Lemma 5.3 in~\cite{DBLP:journals/theoretics/CslovjecsekKLPP25}, \cite{pottier1991minimal}]
    Let $D\in\{-\Delta,-\Delta+1,\dots,\Delta\}^{d\times t},b\in\db d\Xi$ and $x\in\Z_{\ge0}^t$ be so that $Dx=b$. Then there is a solution $\underline x\in\Z_{\ge0}^t$ to $D\underline x=b$ bounded by $\|\underline x\|_\infty\le\O(d(\Delta+\Xi))^d$ and a multiset $\mu\in\Z_{\ge0}^{\smash{\graverbasis(D)\cap\Z_{\ge0}^t}}$ of nonnegative Graver basis elements such that $x=\underline x+\sum_{g\in\graverbasis(D)\cap\Z_{\ge0}^t}\mu_g\cdot g$.
    \label{lemma:solution-decomposition}
\end{lemma}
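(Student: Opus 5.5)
We would prove \cref{lemma:solution-decomposition} by the classical Pottier-style argument: lift the equation $Dx=b$ to a homogeneous system and use the Graver basis of the lifted matrix. Concretely, consider the matrix $D'=(D\mid -b)\in\Z^{d\times(t+1)}$ and the vector $x'=(x,1)\in\Z_{\ge0}^{t+1}$, which lies in $\ker D'$. The entries of $D'$ are bounded by $\Delta':=\max\{\Delta,\Xi\}\le\Delta+\Xi$ in absolute value, so every element of $\graverbasis(D')$ has $\ell_1$-norm at most $\O(d\Delta')^d\le\O(d(\Delta+\Xi))^d$ by the bound of~\cite{DBLP:conf/icalp/EisenbrandHK18} quoted above.

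The key step is the conformal decomposition property of Graver bases. Every integral kernel element is a conformal sum of Graver basis elements, so we write $x'=\sum_j g'_j$ with each $g'_j\in\graverbasis(D')$ and $g'_j\sqsubseteq x'$. Conformality gives each $g'_j\ge0$, since $x'\ge0$. It also gives that the last coordinates of the $g'_j$ are nonnegative integers summing to $1$. Hence exactly one summand, say $g'_1=(\underline x,1)$, has last coordinate $1$, and all others have the form $(g_j,0)$ with $g_j\in\Z_{\ge0}^t$.

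We then read off the two parts of the claimed decomposition. From $D'g'_1=0$ we get $D\underline x=b$, and $\|\underline x\|_\infty\le\|g'_1\|_1\le\O(d(\Delta+\Xi))^d$. For each other summand, $D'(g_j,0)=0$ gives $Dg_j=0$. It remains to check that $g_j\in\graverbasis(D)$. Suppose some nonzero $h\in\ker_\Z D$ satisfied $h\sqsubseteq g_j$ with $h\ne g_j$. Then $(h,0)\in\ker D'$ would be conformally smaller than $(g_j,0)$, contradicting the minimality of $(g_j,0)$ in $\graverbasis(D')$. So each $g_j$ is a nonnegative Graver basis element of $D$. Collecting multiplicities into $\mu$ then yields $x=\underline x+\sum_g\mu_g g$.

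The only step that needs care is the existence of the conformal decomposition into Graver elements together with the norm bound for the augmented matrix. Both are standard: the decomposition follows by repeatedly subtracting a conformally minimal kernel element below the current remainder, and the norm bound is the Eisenbrand--Hunkenschröder--Klein bound applied to $D'$ with coefficient bound $\Delta+\Xi$. If a sharper constant were required, we would instead apply the Steinitz-type argument directly, but the bound stated in the lemma is exactly what the augmented-matrix argument gives.
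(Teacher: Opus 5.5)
Your proof is correct. You lift to $(D\mid -b)$ with kernel vector $(x,1)$, take a conformal Graver decomposition, and identify the unique summand with last coordinate $1$ as $\underline x$. You then show the remaining summands restrict to nonnegative Graver elements of $D$, and apply the Eisenbrand--Hunkenschr\"oder--Klein bound to the augmented matrix to get $\O(d(\Delta+\Xi))^d$.

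The paper does not prove this lemma itself; it cites Cslovjecsek et al.\ and Pottier. Your argument is the standard Pottier-style augmented-matrix proof of this fact, so it follows the intended route.
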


\section{Algorithm}
\label{sec:algorithm}

This section presents the $\O_{dst\Delta}(n^{s+\O(1)}\cdot m\cdot L^{\O(1)})$ time algorithm for 4-block ILPs. Before describing the algorithm in detail, we sketch how Cslovjecksek et al.~\cite{DBLP:journals/theoretics/CslovjecsekKLPP25} employ faithful decompositions to solve (the feasibility variant of) the $B$-uniform $n$-fold ILP problem and argue how this can be extended to 4-block ILPs. In essence, Cslovjecksek et al.~\cite{DBLP:journals/theoretics/CslovjecsekKLPP25} preprocess a given $n$-fold ILP by splitting each brick $D_ix_i=b_i$ using a faithful-decomposition $b_i=b_i^{(1)}+\dots+b_i^{(\ell)}$ into the bricks $D_ix_i^{(1)}=b_i^{(1)},\dots,D_ix_i^{(\ell)}=b_i^{(\ell)}$. This results in an exponentially sized $n$-fold ILP which consists of only $\O_{dt\Delta}(1)$ different bricks with bounded right-hand sides. To exploit this latter fact, they use \cref{lemma:solution-decomposition} to derive a high-multiplicity encoding of the ILP in the form of a \emph{configuration} ILP that has only $\O_{dt\Delta}(1)$ integral variables and contains the high-multiplicity representations of the faithful decompositions as right-hand sides.

Now, observe that a 4-block ILP may be viewed as an ``$n$-fold ILP'' where the right-hand sides $b_i'$ depend affinely on the global variables $x_0$ through $b_i'=L_i(x_0):=b_i-C_ix_0$. We combine this observation with our central \cref{lemma:dynamic-decomposition}, which shows that if $b_i'$ belongs to a fixed lattice translate $r+M\Z^d$ and fixed partially open polyhedron $F$ among $\O_{d\Delta}(1)$ many options, there exist faithful decompositions of $b_i'$ that depend affinely on $b_i'$. The proof of \cref{lemma:dynamic-decomposition} is postponed to \cref{sec:dynamic-faithful-decomposition}.

\begin{restatable}{lemma}{lemmadynamicdecomposition}
    Let $d$ and $\Delta\in\Z_{\ge0}$ be given. Then, there exists a modulus $M=2^{\O(d\Delta)^{d^3}}$, a number $\Xi=2^{\O(d\Delta)^d}$, and an algorithm that outputs:
    \begin{itemize}
        \item A collection $H$ of at most $\O(d\Delta)^{d^4}$ hyperplanes;
        \item An affine map $\mathcal D_{r,F}:\R^d\to\R^{R_{r,F}}$ with $R_{r,F}\subseteq\db d\Xi$ and $|R_{r,F}|\le d+1$ for every remainder vector $r\in\{0,1,\dots,M-1\}^d$ and face $F\in\mathcal F_H$ in the arrangement induced by $H$.
    \end{itemize}
    For all $b\in(r+M\Z^d)\cap F$ and $D\in\{-\Delta,-\Delta+1,\dots,\Delta\}^{d\times t}$, the image $\mathcal D_{r,F}(b)$ is a faithful decomposition of $b$ w.r.t.\ $D$ of order $\Xi$. The algorithm outputs the hyperplanes in $\O(d\Delta)^{d^4}$ time and constructs an affine map $\mathcal D_{r,F}$ for a given $r$ and $F$ in $\O(\Delta)^{d^4}$ time.
    \label{lemma:dynamic-decomposition}
\end{restatable}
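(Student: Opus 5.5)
The plan is to make the faithful-decomposition procedure behind Lemma 5.12 of \cite{DBLP:journals/theoretics/CslovjecsekKLPP25} affine in $b$, by fixing every discrete choice it makes through the data $(r,F)$.

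\textbf{Step 1: fix the cone.} For every basis $B\in\dbb d\Delta$ (at most $(2\Delta+1)^{d^2}$ of them), add the $d$ hyperplanes spanned by the facets of $\cone B$. These facets are exactly the coordinate hyperplanes of $B^{-1}b$. Also add the $d$ coordinate hyperplanes $b_j=0$. Within a face $F$ of the resulting arrangement, the following data are then constant: the sign pattern of $b$, the set of bases $B$ with $b\in\cone B$, and the sign pattern of each $B^{-1}b$. This already gives $\O(\Delta)^{d^3}$ hyperplanes. Further hyperplanes, needed in Step 3, bring the total to $\O(d\Delta)^{d^4}$.

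\textbf{Step 2: decompose affinely inside a fixed cone.} Pick a canonical basis $B=\{v_1,\dots,v_d\}\subseteq\db d\Delta$ (or a subset in the degenerate case), valid for all of $F$, with $b\in\cone B$. Write $b=B\lambda$ with $\lambda=B^{-1}b$, which is affine in $b$. Let $M$ be a multiple of $\lcm$ of all $|\det B'|$ and of the moduli used below. Then $\lambda\bmod 1$ is determined by $r$, so we can write $b=\rho_{r}+\sum_j \lfloor\lambda_j\rfloor' v_j$. Here $\rho_r$ is a bounded remainder vector that depends only on $r$, and the coefficients $\lfloor\lambda_j\rfloor'$ are affine integer-valued functions on $(r+M\Z^d)\cap F$. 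This gives a multiset over $R=\{v_1,\dots,v_d,\rho_r\}$ with $|R|\le d+1$ and with affine multiplicities.

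Conformality $v_j\sqsubseteq b$ is not automatic. To get it, choose the basis among vectors in the orthant of $b$, which Step 1 fixes, and among combinations as in the Klein/CEPVW rearrangement. If needed, group $v_j$'s into larger bounded blocks, which is where $\Xi=2^{\O(d\Delta)^d}$ enters.

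\textbf{Step 3: faithfulness, the main obstacle.} We must show that every $x\in X_b^D$ splits into solutions for the pieces. Write $x$ as a sum of many small solution-columns via \cref{lemma:solution-decomposition}, so $b=\sum D$-images of bounded vectors. Then apply the constructive rearrangement lemma of \cite{DBLP:conf/esa/CslovjecsekEPVW21}: a long sequence of bounded vectors summing into $\cone B$ can be regrouped into blocks whose sums are the prescribed $v_j$ plus a bounded leftover.

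The difficulty is that the decomposition must work uniformly for all $D$ and all $x$ while being fixed in advance. I would follow the recursive structure of their proof, which repeatedly peels off a large multiple of a basis element, and argue that each recursive step depends only on (i) which cone $b$ lies in, (ii) whether certain coefficients exceed thresholds of size $\Xi$, and (iii) residues modulo bounded numbers. Items (i) and (ii) are comparisons of affine functions with constants, so I would add them as hyperplanes to $H$. Item (iii) is absorbed by $M$. The depth of recursion, at most $d$, times the per-level counts is what yields the $d^3$ and $d^4$ exponents.

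The step I expect to be hardest is proving that the thresholded ``large'' coefficients can be handled by a single affine rule. I would handle it by demanding in $F$ that every coefficient is either at least $\Xi$ or equal to a fixed small constant, so that small parts become part of $\rho$. The running time bounds then follow by enumerating bases and hyperplanes, and by computing $B^{-1}$ once for the given $(r,F)$.
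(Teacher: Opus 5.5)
\textbf{There is a genuine gap: the pieces you decompose into are the wrong vectors, so your Step~2 decomposition is not faithful, and Step~3 cannot repair it.} For faithfulness, every piece must lie in $\bigcap_{U\in\mathcal U}\intcone U$, where $\mathcal U$ is the set of \emph{all} bases $U\in\dbb d\Delta$ with $b\in\cone U$. A generator of one chosen cone $\cone B$ is not enough, since a solution of $Dx=b$ may use only the columns of some other $U\in\mathcal U$. Two examples with $\Delta=2$ show this.
\begin{enumerate}
\item Take $b=(3N,3N)$, whose face lies on the diagonal, one of your hyperplanes. Let $D$ contain the columns $(1,1),(2,1),(1,2)$. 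The all-$(1,1)$ solution forces every piece to have the form $(k,k)$. The solution with $N$ copies each of $(2,1)$ and $(1,2)$ then forces $3\mid k$. So no nonzero vector of $\db{2}{2}$ can be a piece, whatever canonical basis you pick.
\item Take $b$ strictly between the directions $(1,0)$ and $(2,1)$, and choose $B=\{(1,0),(0,1)\}$. No positive multiple of $(0,1)$ can be split off a solution that uses only the columns $(1,0),(2,1)$. So grouping the $v_j$ into blocks does not help either.
\end{enumerate}

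\textbf{What the paper uses instead.} \Cref{lemma:intersection-integer-cone} says the cone $\bigcap_{U\in\mathcal U}\cone U$ is generated by vectors of $\db d\Phi$ with $\Phi=\O(d\Delta)^{d^2}$. It also says that $\Psi$ times any integral point of this cone lies in every $\intcone U$. The paper takes linearly independent generators $v_1,\dots,v_\ell$ with $F\subseteq\cone\{v_1,\dots,v_\ell\}$ and uses the pieces $w_i=\Psi v_i$. Consequently $H$ must contain the hyperplanes $(\overline V^{-1}b)_i=0$ and $(\overline V^{-1}b)_i=t\Psi$ for all bases $\overline V\in\dbb d\Phi$, not only for bases in $\dbb d\Delta$. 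This, not a recursion depth, is where $|H|=\O(d\Delta)^{d^4}$ comes from. Likewise, $M=t\Psi\cdot\lcm\{1,\dots,\Phi^dd^{d/2}\}$ is where the bound $2^{\O(d\Delta)^{d^3}}$ comes from.

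\textbf{The peeling step is also mis-described.} Faithfulness requires every solution $x$ to split with exactly the prescribed multiplicities and exactly the prescribed root. ``Prescribed blocks plus a bounded leftover'' is not enough. The paper uses the mechanism of \cite{DBLP:conf/esa/CslovjecsekEPVW21}, which peels one piece at a time and has a threshold.
\begin{itemize}
\item For $p=\sum_i\nu_iw_i$ with all $\nu_i>0$, take a vertex $u$ of $\{x'\ge0:Dx'=p\}$ with $x\ge u/(t+1)$, and let $U$ be its basis.
\item Show $U\in\mathcal U$ even though $p$ need not lie in $F$; this is \cref{claim:stay-in-the-cone}.
\item Write $w_{i^*}=U\tilde y_{i^*}$ with $\tilde y_{i^*}$ integral. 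The basic part of $x$ then dominates $\tilde y_{i^*}$ once $\nu_{i^*}\ge t+1$.
\end{itemize}
Below such a threshold, peeling can genuinely fail. For example, with $d=1$, $\Delta=3$ and the piece $w=6\in\bigcap_U\intcone U$, the solution $2+2+3$ of $b=7$ has no sub-sum equal to $6$.

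\textbf{Consequences for the root.} Your $\rho_r$, whose multipliers are fractional parts, cannot serve as the root. The root must keep multiplier $\gamma_i=t+\alpha_i$ in every large direction. For the same reason, the residues must be fixed modulo $t$, not modulo $1$, so that a small multiplier $\lambda_i<t$ is pinned to the single value $\alpha_i$. Linear independence of the $w_i$ then makes the peeling stop exactly at $q$ after $z_i=(\overline W^{-1}b)_i-\gamma_i$ copies of each $w_i$. This is what makes $\mathcal D_{r,F}$ affine.
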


In our setting, $b_i'$ depends affinely on $x_0$, meaning that we can construct high-multiplicity encoded faithful decompositions of $b_i'$ that depend affinely on $x_0$. As this encoding appears in the right-hand side of the configuration ILP of Cslovjecksek et al.~\cite{DBLP:journals/theoretics/CslovjecsekKLPP25}, we can replace the static faithful decompositions obtained with the $\O_{dt\Delta}(\log^{\O(1)}\|b\|_\infty)$ time algorithm by Cslovjecksek et al.~\cite{DBLP:journals/theoretics/CslovjecsekKLPP25} with affine functions of $x_0$ and obtain a configuration integer program, that encodes the 4-block ILP, with linear constraints. This ILP, which has $\O_{dt\Delta}(1)+s$ integral variables, can then be solved using Kannan's algorithm~\cite{DBLP:journals/mor/Kannan87}.

The configuration ILP can be constructed as long as the right-hand sides $b_i'$ belong to a domain of the form $(r+M\Z^d)\cap F$ that is known upfront for each brick~$i\in[n]$, c.f.\ \cref{lemma:dynamic-decomposition}. To achieve this, we simultaneously guess the domains for each brick. A key observation is that the structure of the domains enables us to guess this with significantly less than the trivial bound of $\O_{d\Delta}(1)^n$ guesses by ``lifting'' these domains to the $s$-dimensional space of the global variables $x_0$.

The lattice structure $r+M\Z^d$ is especially convenient, as it can be guessed with a parameterized number of guesses, which has recently been exploited in $2$-stage stochastic integer linear programming~\cite{DBLP:journals/theoretics/CslovjecsekKLPP25,eisenbrand2025parameterizedlinearformulationinteger}: guessing the remainder vector $r\in\{0,1,\dots,M-1\}^s$ for which $x_0\equiv r\pmod M$ automatically fixes the remainder vector $r_i$ of $L_i(x_0)$ modulo $M$ to $r_i=(b_i-C_ir)\bmod M$. Here, we abuse notation to perform modular arithmetic element-wise on vectors.

Simultaneously guessing the faces $F$ is more expensive. For each hyperplane $h$ in the set generated by \cref{lemma:dynamic-decomposition}, for a brick $i\in[n]$, we lift the hyperplane to the pre-image under $L_i$ in $\R^s$ and compute the arrangement of the resulting hyperplanes in $\O_{ds\Delta}(n^s)$ time using the algorithm by Edelsbrunner et al. (\cref{theorem:edelsbrunner}). It is not hard to see that restricting $x_0$ to a face of this arrangement results in $L_i(x_0)$ living in a unique face of the $d$-dimensional arrangement for brick $i$ that corresponds to a domain for which we can obtain affine faithful decompositions. This is made explicit in the proof of \cref{theorem:uniform-4-block-xp}. A visualization of possible regions to which $x_0$ is restricted in the proof of \cref{theorem:uniform-4-block-xp} is shown in \cref{fig:domains}.

\begin{figure}
    \centering
    \includegraphics[width=0.8\linewidth]{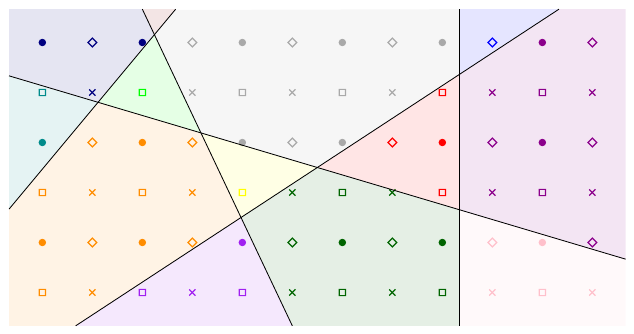}
    \caption{A partition of $\Z^2$ into four remainder classes modulo $2$, indicated by the differently shaped markers placed at each lattice point, and faces induced by $5$ hyperplanes, indicated by differently colored regions for the $2$-dimensional faces. Note that lower-dimensional faces also appear in arrangements and may contain integral points. To solve a 4-block ILP, we solve a configuration ILP for each set of identically shaped and colored points and restricting $x_0$ to be in such set.}
    \label{fig:domains}
\end{figure}

We are now ready to prove \cref{theorem:uniform-4-block-xp}.

\theoremuniformfourblockxp*

\begin{proofof*}{theorem:uniform-4-block-xp}
    Let $M=2^{\O(d\Delta)^{d^3}}$ and $\Xi=2^{\O(d\Delta)^d}$ be the numbers defined in \cref{lemma:dynamic-decomposition}. Let $\eta=\O(d(\Delta+\Xi))^d=\O(d(\Delta+2^{\O(d\Delta)^d}))^d=2^{\O(d\Delta)^d}$ be the norm bound on $\underline x$ defined in \cref{lemma:solution-decomposition} and let $\delta=\O(d\Delta)^d$ be the $\infty$-norm bound on Graver basis elements of any matrix $D\in\{-\Delta,-\Delta+1,\dots,\Delta\}^{d\times t}$ from~\cite{DBLP:conf/icalp/EisenbrandHK18}. Note that all mentioned numbers only depend on $d$ and $\Delta$.

    First, we apply \cref{lemma:dynamic-decomposition} to generate a collection of hyperplanes $H_i$ in $\R^d$ for each brick $i\in[n]$. Recall that these define the polyhedral regions in which the affine faithful decompositions of $b_i$ are valid. Lift each hyperplane to the space $\R^s$ through $L_i^{-1}(h):=(-C_i^\top a_h,\beta_h-a_h^\top b_i)$. We consider the arrangement induced by
    \[
        H=\bigcup_{i\in[n],h\in H_i}L_i^{-1}(h)
    \]
    and compute this arrangement $\mathcal F_H$ using \cref{theorem:edelsbrunner}. Observe that $|H|\le n\cdot\O(d\Delta)^{d^4}$. Using that $|\mathcal F_H|\le\O(|H|)^s$~\cite{DBLP:series/eatcs/Edelsbrunner87,grunbaum1967convex}, it follows that $|\mathcal F_H|\le\O(d\Delta)^{d^4s}\cdot n^s$. 
    
    We guess the nonempty face $F\in\mathcal F_H$ that contains $x_0$, i.e., the position $\rel$ of $x_0$ w.r.t.\ the hyperplanes $H$, and the remainder vector $r\in\{0,1,\cdots,M-1\}^s$ of $x_0$ modulo $M$. More concretely, for each combination of $r$ and $F\ne\emptyset$ among the 
    \[
        M^s\cdot|\mathcal F_H|\le2^{\O(d\Delta)^{d^3}\cdot s}\cdot\O(d\Delta)^{d^4s}\cdot n^s=2^{\O(d\Delta)^{d^3}\cdot s}\cdot n^s
    \]
    options, we solve a high-multiplicity encoding of the 4-block ILP using \cref{lemma:dynamic-decomposition}. In order to construct this encoding, for every brick $i\in[n]$, we compute:
    \begin{itemize}
        \item the face $F_i\in\mathcal F_{H_i}$ that contains $L_i(F)$, i.e., contains all $b_i-C_ix_0$ for all $x_0\in F$. This face has the position vector $\rel_i\in\signs^{H_i}$ given by $(\rel_i)_h=\rel_{L_i^{-1}(h)}$ for $h\in H_i$. This follows from the fact that for any $x_0$ it holds that
        \begin{alignat*}{2}
            &L_i(x_0)\mathrel{(\rel_i)_h}\beta_h&\iff& a_h^\top(b_i-C_ix_0)\mathrel{(\rel_i)_h}\beta_h\\
            \iff&(-C_i^\top a_h)^\top x_0\mathrel{(\rel_i)_h}\beta_h-a_h^\top b_i&\iff& a_{L_i^{-1}(h)}^\top x_0\mathrel{(\rel_i)_h}\beta_{L_i^{-1}(h)}.
        \end{alignat*}
        \item the remainder vector $r_i=(b_i-C_ir)\bmod M$ so that $L_i(x_0)\equiv r_i\pmod M$ when $x_0\equiv r\pmod M$.
        \item a dynamic faithful decomposition $\mathcal D_i$ using \cref{lemma:dynamic-decomposition} for the remainder vector $r_i$ and face $F_i$. That is, construct the affine map $\mathcal D_i\colon\R^d\to\R^{R_i}$ so that $\mathcal D_i(b)$ is a faithful decomposition of $b$ using elements from $R_i\subseteq\db d\Xi$ as long as $b\in(r_i+M\Z^d)\cap F_i$. By construction, this holds for $b=L_i(x_0)=b_i-C_ix_0$ when $x_0\in(r+M\Z^s)\cap F$. Represent $\mathcal D_i$ with a matrix in $\Q^{R_i\times d}$ and vector in $\Q^{R_i}$.
        \item the set of base solutions $\underline X_{b',i}=\{x\in\Z_{\ge0}^t:\|x\|_\infty\le\eta,D_ix=b'\}$ to $D_ix=b'$ for each small right-hand side vector $b'\in R_i$.
        \item the set of Graver basis elements $G_i:=\graverbasis(D_i)\cap\Z_{\ge0}^t$ of $D_i$\footnote{We note that the diagonal blocks $D_i$ in a 4-block ILP can be transformed to have nonnegative coefficients by introducing auxiliary local variables that model $x_i'=u-x_i$ for some upper bounds $u$ (enforced by local constraints). Then, if the local variables that do not appear in local constraints are treated separately, this results in $G_i=\emptyset$, which simplifies \cref{ilp:decomposed-4-block}, but comes at the cost of increasing the dependence of $t$ on the running time.}.
    \end{itemize}
    We can now solve the 4-block ILP restricted to $x_0\in(r+M\Z^s)\cap F$ by solving the following configuration ILP:
    \begin{alignat}{3}
       \text{minimize} \quad & \multicolumn{4}{l}{\normalsize$\displaystyle c_0^\top x_0+\sum_{i\in[n]}c_i^\top\bigl(\sum_{b'\in R_i}\sum_{\ x'\in\underline X_{b',i}}x'\cdot p_{b',x'}^i+\sum_{g\in G_i}g\cdot p_g^i\bigr),$} \label[ilp]{ilp:decomposed-4-block}\tag{C}\\
       \text{subject to} \quad & \multicolumn{4}{l}{\normalsize$\displaystyle Ax_0+B\Bigl(\sum_{x'\in\db t\eta\cap\Z_{\ge0}^t}x'\cdot q_{x'}+\sum_{g\in\db t\delta\cap\Z_{\ge0}^t}g\cdot q_g\Bigr)=b_0,$} \label[constraint]{constraint:global}\\
       & & q_{x'} & =\sum_{i\in[n]}\sum_{\ \substack{b'\in R_i:\\x'\in\underline X_{b',i}}}p_{b',x'}^i, & x' & \in\db t\eta\cap\Z_{\ge0}^t, \label[constraint]{constraint:aggregate-base-solutions}\\
       & & q_g & =\sum_{\substack{i\in[n]:\\g\in G_i}}p_g^i, & g & \in\db t\delta\cap\Z_{\ge0}^t, \label[constraint]{constraint:aggregate-graver}\\
       & \multicolumn{3}{l}{\normalsize$\displaystyle\sum_{x'\in\underline X_{b',i}}p_{b',x'}^i=(\mathcal D_i(b_i-C_ix_0))_{b'}$}, & i & \in[n],\,b'\in R_i, \label[constraint]{constraint:faithful-decomposition}\\
       & & q_{x'} & \in\Z_{\ge0}, & x' & \in\db t\eta\cap\Z_{\ge0}^t,\notag\\
       & & q_g & \in\Z_{\ge0}, & g & \in\db t\delta\cap\Z_{\ge0}^t,\notag\\
       & & p_{b',x'}^i & \in\Z_{\ge0}, & i\in[n],\,b' & \in R_i,\,x'\in\underline X_{b',i},\quad\quad\notag\\
       & & p_g^i & \in\Z_{\ge0}, & i & \in[n],\,g\in G_i,\notag\\
       & & x_0 & \in(r+M\Z^s)\cap F,\ x_0\ge0. & & \notag
    \end{alignat}
    
    By taking the optimal solution among all guesses of $r$ and $F$, we can report the optimal solution to the original 4-block ILP.

    The correctness of the algorithm follows directly from \cref{claim:correct-encoding} and the fact that the domains of the form $(r+M\Z^s)\cap F$ partition $\Z^s$.

    \begin{claimin}
        \cref{ilp:decomposed-4-block} correctly encodes \cref{ilp:4-block} subject to $x_0\in(r+M\Z^s)\cap F$.
        \label{claim:correct-encoding}
    \end{claimin}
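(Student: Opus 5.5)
We prove the claim in two directions. First, every feasible solution of \cref{ilp:4-block} with $x_0\in(r+M\Z^s)\cap F$ yields a feasible solution of \cref{ilp:decomposed-4-block} with the same objective value. Second, every feasible solution of \cref{ilp:decomposed-4-block} can be turned into a feasible solution of \cref{ilp:4-block}, with the same $x_0$ and the same objective value. Together these show that both programs have the same optimum on this domain, and that optimal solutions can be translated back.

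\textbf{From \cref{ilp:4-block} to \cref{ilp:decomposed-4-block}.} Fix a solution $(x_0,x_1,\dots,x_n)$ with $x_0\in(r+M\Z^s)\cap F$. For every brick $i$, the vector $b=b_i-C_ix_0$ lies in $(r_i+M\Z^d)\cap F_i$, because the position vector $\rel_i$ and the remainder $r_i$ were computed exactly so that this holds. By \cref{lemma:dynamic-decomposition}, $f^i:=\mathcal D_i(b_i-C_ix_0)$ is therefore a faithful decomposition of $b_i-C_ix_0$ w.r.t.\ $D_i$ of order $\Xi$. In particular $f^i$ is a nonnegative integer vector; this integrality comes from the lemma, not from affinity alone.

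Since $D_ix_i=b_i-C_ix_0$, the faithfulness property writes $x_i$ as a sum of $\sum_{b'}f^i_{b'}$ pieces. Exactly $f^i_{b'}$ of these pieces solve $D_iy=b'$, for each $b'\in R_i$.

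Apply \cref{lemma:solution-decomposition} to each piece $y$ with $D_iy=b'$ and $\|b'\|_\infty\le\Xi$. This gives $y=\underline y+\sum\mu_g g$, where $\underline y\in\underline X_{b',i}$ and each $g\in G_i$. Now set $p^i_{b',x'}$ to the number of pieces whose base solution is $x'$, and set $p^i_g$ to the total multiplicity of $g$ over all pieces of brick $i$. Then \cref{constraint:faithful-decomposition} holds, and $x_i=\sum_{b',x'}x'p^i_{b',x'}+\sum_g g\,p^i_g$.

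Define $q$ by \cref{constraint:aggregate-base-solutions,constraint:aggregate-graver}. These aggregations are well-defined because $\underline X_{b',i}\subseteq\db t\eta\cap\Z_{\ge0}^t$ and $G_i\subseteq\db t\delta\cap\Z_{\ge0}^t$, by the Graver norm bound. Using $B$-uniformity, $\sum_iBx_i=B(\sum_{x'}x'q_{x'}+\sum_g g\,q_g)$, so \cref{constraint:global} follows from the global constraint of \cref{ilp:4-block}. The objective values agree by the formula for $x_i$.

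\textbf{From \cref{ilp:decomposed-4-block} to \cref{ilp:4-block}.} Take a feasible $(x_0,p,q)$ and define $x_i:=\sum_{b',x'}x'p^i_{b',x'}+\sum_g g\,p^i_g$. This vector is a nonnegative integer vector. Since $D_ig=0$ and $D_ix'=b'$,
\[
D_ix_i=\sum_{b'\in R_i}b'\sum_{x'}p^i_{b',x'}=\sum_{b'\in R_i}b'\,(\mathcal D_i(b_i-C_ix_0))_{b'}=b_i-C_ix_0 .
\]
The last equality uses the first property of a faithful decomposition, which is valid because $x_0$ lies in the guessed domain. Hence each local constraint $C_ix_0+D_ix_i=b_i$ holds. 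Substituting \cref{constraint:aggregate-base-solutions,constraint:aggregate-graver} into \cref{constraint:global} and using $B$-uniformity gives $Ax_0+\sum_iBx_i=b_0$. The objectives coincide by construction.

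\textbf{Main obstacle.} The delicate step is justifying that $\mathcal D_i(b_i-C_ix_0)$ is a genuine faithful decomposition, in particular integral and nonnegative, for every $x_0$ in the guessed domain. This rests on checking that $L_i$ maps $(r+M\Z^s)\cap F$ into $(r_i+M\Z^d)\cap F_i$. The face part is the position-vector computation given before the claim. The lattice part holds because $b_i-C_i(r+Mz)\equiv b_i-C_ir\pmod M$ for integral $C_i$. Everything else is bookkeeping.
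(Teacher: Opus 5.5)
Your proposal is correct and follows the same route as the paper. On the guessed domain, $\mathcal D_i(b_i-C_ix_0)$ is a faithful decomposition, so it splits each $x_i$ into pieces; \cref{lemma:solution-decomposition} then splits each piece into a bounded base solution plus nonnegative Graver elements. Conversely, any choice of multiplicities recombines into a valid brick solution, because Graver elements lie in the kernel of $D_i$. You spell out this converse direction and the role of $B$-uniformity in the aggregation constraints more explicitly than the paper, which only gestures at them, but the argument is the same.
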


    \begin{claimproof}
        As $b_i-C_ix_0\in F_i$ and $b_i-C_ix_0\in r_i+M\Z^d$ for all $x_0\in(r+M\Z^s)\cap F$, the faithful decompositions $\mathcal D_i(b_i-C_ix_0)$ are proper faithful decompositions of $b_i-C_ix_0$. Therefore, any solution $x_i\in\Z_{\ge0}^t$ to $D_ix_i=b_i-C_ix_0$ can be written as the sum
        \[
            x_i=\sum_{b'\in R_i}\sum_{\ x'\in X_{b'}^{D_i}}(u_{b'})_{x'}\cdot x'
        \]
        for some multisets $u_{b'}\in\Z_{\ge0}^{X_{b'}^{D_i}}$ of solutions to the decomposed blocks that matches the decomposition through $\sum_{x'\in X_{b'}^{D_i}}(u_{b'})_{x'}=(\mathcal D_i(b_i-C_ix_0))_{b'}$ for $b'\in R_i$. \cref{lemma:solution-decomposition} shows that any $x'\in X_{b'}^{D_i}$ can be written as $x'=x''+g_1+\dots+g_\ell$ for $x''\in\underline X_{b',i}$ and $g_1,\dots,g_\ell\in G_i$. Since adding any Graver basis element to $x_i$ still yields a solution to $D_ix_i=b_i-C_ix_0$, this justifies the encoding
        \[
            x_i=\sum_{b'\in R_i}\sum_{\ x'\in\underline X_{b',i}}p_{b',x'}^i\cdot x'+\sum_{g\in G_i}p_g^i\cdot g
        \]
        under \cref{constraint:faithful-decomposition} and arbitrary nonnegative multiplicities $p_g^i$.
    \end{claimproof}

    We now argue that \cref{ilp:decomposed-4-block} can be solved in FPT time.

    \begin{claimin}
        \cref{ilp:decomposed-4-block} can be solved in time $s^{\O(d\Delta)^d\cdot st+2^{\O(d\Delta)^d\cdot t}}\cdot m\cdot(nL)^{\O(1)}$.
    \end{claimin}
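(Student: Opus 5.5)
The plan is to observe that in \cref{ilp:decomposed-4-block} only a parameter-bounded number of variables need to be integral. I would solve it as a mixed integer linear program with Kannan's algorithm~\cite{DBLP:journals/mor/Kannan87} (in its mixed-integer form, as in Lenstra's approach), treating the many variables $p^i$ as continuous. The variables kept integral are $x_0$ (more precisely $y$, defined below) and the aggregate variables $q_{x'}$ for $x'\in\db t\eta\cap\Z_{\ge0}^t$ and $q_g$ for $g\in\db t\delta\cap\Z_{\ge0}^t$. Their number is at most $s+(2\eta+1)^t+(2\delta+1)^t=s+2^{\O(d\Delta)^d\cdot t}$.

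First I would remove the nonstandard domain constraints.
\begin{itemize}
    \item I substitute $x_0=r+My$ with $y\in\Z^s$. This is linear, so $(\mathcal D_i(b_i-C_ix_0))_{b'}$ becomes an affine function of $y$ with rational coefficients whose encoding length is polynomial in $L$ and the parameters.
    \item Membership in the face $F$ is a system of equalities and (weak or strict) inequalities on $x_0$. Each strict inequality $a^\top x_0<\beta$ with rational data becomes $a^\top x_0\le\beta-1/N$, where $N$ is a common denominator. This is valid because $x_0$ ranges over a lattice whose scaled values are integers. This preserves polynomial encoding length.
\end{itemize}
The global \cref{constraint:global} contributes $m$ rows with coefficients of size $\O_{dt\Delta}(1)\cdot 2^L$.

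The key step is to show that fixing integer values of $y$ and $q$ and relaxing the $p$'s to be continuous loses nothing. With $y,q$ fixed, the remaining constraints split into two independent parts.
\begin{itemize}
    \item The $p_{b',x'}^i$ form a transportation problem on a bipartite graph. On one side are supply nodes $(i,b')$ with supply $(\mathcal D_i(b_i-C_ix_0))_{b'}$; on the other are demand nodes $x'$ with demand $q_{x'}$. There is an arc whenever $x'\in\underline X_{b',i}$, and its cost is $c_i^\top x'$.
    \item The $p_g^i$ only have to sum to $q_g$ over those $i$ with $g\in G_i$, with cost $c_i^\top g$.
\end{itemize}
The supplies are integral because, for $x_0$ in the guessed domain, $\mathcal D_i(b_i-C_ix_0)$ is an (integral) faithful decomposition by \cref{lemma:solution-decomposition}'s companion \cref{lemma:dynamic-decomposition}. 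The demands are integral because $q$ is. Both constraint matrices are totally unimodular (network matrices). Hence whenever the continuous relaxation in $p$ is feasible and bounded, it has an integral optimal vertex. So the MILP optimum equals the optimum of \cref{ilp:decomposed-4-block}.

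Unboundedness is detected by Kannan's algorithm as usual. For instance, if some $g\in G_i$ has $c_i^\top g<0$ and the problem is feasible, the objective is unbounded, and this is consistent with the original 4-block ILP.

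Finally I would bound the running time. Kannan's algorithm runs in $p^{\O(p)}$ times a polynomial in the encoding length, where $p$ is the number of integer variables. The number of continuous variables and constraints is $n\cdot 2^{\O(d\Delta)^d t}+m$, and the encoding length of the coefficients is $\O_{dt\Delta}(L)$. Plugging in $p=s+2^{\O(d\Delta)^d t}$ gives a bound of the form $s^{\O(d\Delta)^d\cdot st+2^{\O(d\Delta)^d\cdot t}}\cdot m\cdot(nL)^{\O(1)}$. Here the factor $m$ accounts for the global rows, and $(s+K)^{s+K}\le s^{\O(sK')}\cdot K^{\O(K)}$ is absorbed by the stated exponent.

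I expect the main obstacle to be this accounting. I also need to make sure the variables $p$ may be continuous only once $q$ is integral, i.e., that integrality of the supplies (which relies on $x_0\in r+M\Z^s$) is exactly what makes the transportation LP integral. I would check that subtlety carefully rather than the routine encoding-length bounds.
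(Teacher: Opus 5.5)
Your proposal is correct and follows the paper's argument. You substitute $x_0=r+My$, convert strict face inequalities to non-strict ones via integrality, and note that the $p$-part of the constraint matrix is totally unimodular (the paper cites Ghouila-Houri; your transportation/network-matrix view is the same fact); you then relax the $p$ variables and run Kannan's algorithm with $s+2^{\O(d\Delta)^d\cdot t}$ integer variables. Your explicit check that the right-hand sides $(\mathcal D_i(b_i-C_ix_0))_{b'}$ are integral for $x_0\in(r+M\Z^s)\cap F$ is a detail the paper leaves implicit, but it is indeed needed for the relaxation to be exact.
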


    \begin{claimproof}
        The constraint $x_0\in F$ is defined by a partially open polyhedron $F$ and its associated position vector $\rel\in\signs^H$. This corresponds to $a_h^\top x_0\mathrel{\rel_h}\beta_h$ for all $h\in H$. After rescaling the hyperplane coefficients to be integral, a strict inequality $a^\top x_0<\beta$ is equivalent to $a^\top x_0\le\beta-1$. Note that the condition $x_0\in r+M\Z^s$ can be enforced by adding the variables $v\in\Z^s$ and substituting $x_0=r+Mv$ throughout the formulation. This shows that \cref{ilp:decomposed-4-block} corresponds to an integer program with linear constraints.
        
        We now estimate the dimensions of \cref{ilp:decomposed-4-block}. When we aggregate the $v$ and $q$ variables into $z$, the ILP has the form $\min\{c^{\prime\top}(z,p)\ \vert\ Q_1z+Q_2p=f_1,Q_3z\le f_2,(z,p)\in\Z^{N_\Z}\times\Z_{\ge0}^{N_\R}\}$, where $N_\Z\le s+(\eta+1)^t+(\delta+1)^t=s+2^{\O(d\Delta)^d\cdot t}$ and $N_{\R}\le n\cdot((d+1)\cdot(\eta+1)^t+(\delta+1)^t)=2^{\O(d\Delta)^d\cdot t}\cdot n$. The number of constraints of \cref{ilp:decomposed-4-block} is $\O(|H|+m+N_\Z+N_\R)=\O(m+N_{\Z}+N_\R)$. We now estimate the largest absolute value of the numerator and denominator of any number that appears in the ILP after substituting $x_0=r+Mv$.
        \begin{itemize}
            \item The objective coefficients and coefficients appearing in \cref{constraint:global} are integers bounded by $\O(2^L)\cdot t\cdot(\eta+\delta)\cdot M=2^L\cdot t\cdot 2^{\O(d\Delta)^{d^3}}$.
            \item The matrix $\overline W$ used in the proof of \cref{lemma:dynamic-decomposition} has numerators and denominators bounded by $\Psi\cdot\Phi^dd^{d/2}=2^{\O(d\Delta)^d}$. Therefore, the representation of $\alpha,\gamma$ and $z$ in the proof of \cref{lemma:dynamic-decomposition}, which make up $\mathcal D_i$, have numerators and denominators bounded by the same asymptotic quantity. As a result, the numerator and denominators of the coefficients appearing on the right-hand side of \cref{constraint:faithful-decomposition} are bounded by $2^L\cdot2^{\O(d\Delta)^d}\cdot M=2^L\cdot2^{\O(d\Delta)^{d^3}}$.
            \item The normal vectors $a_h$ of the hyperplanes in $H_i$ correspond to the rows of matrices~$\overline V^{-1}$ in the proof of \cref{lemma:dynamic-decomposition}. By multiplying with $\det\overline V$ we obtain integral hyperplane normal vectors bounded by $\Phi^dd^{d/2}=\O(d\Delta)^{d^3}$. The hyperplane offsets $\beta_h$ become integers bounded by $\det\overline V\cdot(t\cdot\Psi)=2^{\O(d\Delta)^d}\cdot t$. Following the definition of $L_i^{-1}(h)$, $F$ can be represented with integral coefficients that are bounded by $2^L\cdot t\cdot2^{\O(d\Delta)^d}\cdot M=2^L\cdot t\cdot2^{\O(d\Delta)^{d^3}}$.
        \end{itemize}
        Taking the logarithm shows that the encoding lengths of all rational numbers in the instance are bounded by $L'=\log(2^L\cdot t\cdot2^{\O(d\Delta)^{d^3}})=L+\log t+\O(d\Delta)^{d^3}$.
    
        We now repeat the observation by Cslovjecksek et al.~\cite{DBLP:journals/theoretics/CslovjecsekKLPP25}, which reveals that \cref{ilp:decomposed-4-block} can be solved efficiently. Since every variable $p_{b',x'}^i$ appears in one constraint of type (\ref{constraint:faithful-decomposition}) and one constraint of type (\ref{constraint:aggregate-base-solutions}) and every variable $p_g^i$ appears in one constraint of type (\ref{constraint:aggregate-graver}), the matrix $Q_2$ is totally unimodular by Ghouila-Houri's criterion. Therefore, the ILP can be solved by solving the mixed-integer relaxation $\min\{c^{\prime\top}(z,p)\ \vert\ Q_1z+Q_2p=f_1,Q_3z\le f_2,(z,p)\in\Z^{N_\Z}\times\R_{\ge0}^{N_\R}\}$. Using Kannan's algorithm~\cite{DBLP:journals/mor/Kannan87}, we can solve this MILP in time
        \begin{align*}
            &N_\Z^{\O(N_\Z)}\cdot\O(m+N_\Z+N_\R)\cdot(N_\R L')^{\O(1)}\\
            &=(s+2^{\O(d\Delta)^d\cdot t})^{\O(s+2^{\O(d\Delta)^d\cdot t})}\cdot m\cdot(2^{\O(d\Delta)^d\cdot t}\cdot n\cdot (L+\log t+\O(d\Delta)^{d^3}))^{\O(1)}\\
            &=s^{\O(d\Delta)^d\cdot st+2^{\O(d\Delta)^d\cdot t}}\cdot m\cdot(nL)^{\O(1)}.
        \end{align*}
    \end{claimproof}

    We now estimate the running time of the other parts of algorithm. We can employ \cref{lemma:dynamic-decomposition} to compute the hyperplanes $H$ in $\O(d\Delta)^{d^4}\cdot n$ time and use \cref{theorem:edelsbrunner} to compute the arrangement $\mathcal F_H$ in $\O(n\cdot\O(d\Delta)^{d^4})^s=\O(d\Delta)^{d^4s}\cdot n^s$ time. For each of $2^{\O(d\Delta)^{d^3}\cdot s}\cdot n^s$ many domain guesses of $(r+M\Z^s)\cap F$ and for each brick $i\in[n]$, we can derive $F_i$ and $r_i$ in a negligible amount of time. Extracting $\mathcal D_i$ with \cref{lemma:dynamic-decomposition} takes $\O(\Delta)^{d^4}$ time. Since $G_i\subseteq\{0,1,2,\dots,\delta\}^t$, we can naively compute $G_i:=\graverbasis(D_i)\cap\Z_{\ge0}^t$ in $(dt)^{\O(1)}\cdot((\delta+1)^t)^2=(d\Delta)^{\O(dt)}$ time. Similarly, for each $b'\in R_i$, we can compute $\underline X_{b',i}$ in time $(dt)^{\O(1)}\cdot(\eta+1)^t=2^{\O(d\Delta)^d\cdot t}$.

    Combining all previous estimations shows that the running time of the entire algorithm is bounded by
    \begin{align*}
        &\O(d\Delta)^{d^4}\cdot n+\O(d\Delta)^{d^4s}\cdot n^s+2^{\O(d\Delta)^{d^3}\cdot s}\cdot n^s\\
        &\cdot(n\cdot(\O(\Delta)^{d^4}+(d\Delta)^{\O(dt)}+(d+1)\cdot2^{\O(d\Delta)^d\cdot t})+s^{\O(d\Delta)^d\cdot st+2^{\O(d\Delta)^d\cdot t}}\cdot m\cdot(nL)^{\O(1)})\\
        &=2^{\O(d\Delta)^{d^3}\cdot s}\cdot s^{\O(d\Delta)^d\cdot st+2^{\O(d\Delta)^d\cdot t}}\cdot n^s\cdot m\cdot(nL)^{\O(1)}.
    \end{align*}
\end{proofof*}

Similarly to the $n$-fold algorithm by Cslovjecksek et al.~\cite{DBLP:journals/theoretics/CslovjecsekKLPP25}, the running time in \cref{theorem:uniform-4-block-xp} only depends polynomially on the number of global constraints\footnote{This is not surprising: $B$-uniformity implies that the column rank of the first $m$ rows of the 4-block constraint matrix is at most $s+t$ and we may assume that $m\le s+t$ after removing redundant constraints. On the other hand, it is notable that the techniques used in the proof of \cref{theorem:uniform-4-block-xp} still yield an algorithm that has a running time that is linear in $m$ when the global constraints of \cref{ilp:4-block} are formulated in inequality form as $Ax_0+B(x_1+\dots+x_n)\le b_0$.}.

We now discuss how to obtain \cref{theorem:4-block-strongly-xp} from the proof of \cref{theorem:uniform-4-block-xp}. The first step is to reduce a non-uniform 4-block ILP to a $B$-uniform problem. Dvořák et al.~\cite{DBLP:journals/ai/DvorakEGKO21} provide a parameterized reduction that reduces non-uniform 4-block ILPs to uniform 4-block ILPs. Since they aim for $D_1=\dots=D_n$, their reduction increases $d$ exponentially. Therefore, naively implementing their reduction would result in a quadruply exponential parameter dependence hidden in the constant in \cref{theorem:4-block-strongly-xp}. For this reason, we explicitly show in \cref{lemma:reduction-to-uniform} how non-uniform 4-block ILP may be reduced to $B$-uniform 4-block ILP while not increasing $d$ significantly.

\begin{lemma}
    4-block ILP is reducible to $B$-uniform 4-block ILP by changing $t$ to $(2\overline\Delta+1)^{m+d}$ and additively increasing $d$ by $1$.
    \label{lemma:reduction-to-uniform}
\end{lemma}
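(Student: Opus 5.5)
Every column of a non-uniform brick $\binom{B_i}{D_i}$ lies in $\{-\overline\Delta,\dots,\overline\Delta\}^{m+d}$. The plan is to replace each brick by a brick over a universal set of columns, and to add one extra local row that switches off the columns that do not belong to brick $i$.

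Let $U=\{-\overline\Delta,\dots,\overline\Delta\}^{m+d}$, so $|U|=(2\overline\Delta+1)^{m+d}=t'$. Define the common global block $B'\in\Z^{m\times t'}$ whose column indexed by $u=(u^B,u^D)\in U$ is $u^B$. For brick $i$, let the new diagonal block $D_i'\in\Z^{(d+1)\times t'}$ have column $u$ equal to $(u^D,\,0)$ if $u$ occurs as a column of $\binom{B_i}{D_i}$, and $(u^D,\,1)$ otherwise. Let $C_i'=\binom{C_i}{0}$ and give the new local row right-hand side $0$. Since all variables are nonnegative, the last row forces every variable $x'_{i,u}$ with $u$ not a column of brick $i$ to be $0$. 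The remaining variables correspond exactly to the columns of brick $i$. Coefficients stay bounded by $\overline\Delta$ (at least $1$), $d$ grows by exactly $1$, and $t$ becomes $t'$.

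Distinct original columns $j$ of brick $i$ may share the same vector $u$. For feasibility these columns are interchangeable. For optimality, assign to $x'_{i,u}$ the cost $\min\{(c_i)_j : \text{column } j \text{ of brick } i \text{ equals } u\}$, and give cost $0$ to the forced-zero columns. Let $\phi$ map an original solution to a new one by setting $x'_{i,u}=\sum_{j:\text{col}_j=u}(x_i)_j$. Then $\phi$ preserves feasibility and does not increase cost. Conversely, placing all of $x'_{i,u}$ on the cheapest such $j$ yields an original solution of equal cost. Hence the two ILPs have the same optimum, and solutions map back in polynomial time. The blocks $A$ and $C_i$ are unchanged apart from the appended zero row, so the instance is $B$-uniform. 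It is not fully uniform, since the $D_i'$ differ, but $B$-uniformity is all that is claimed.

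The main obstacle is to keep $d$ from growing more than additively. This is exactly what distinguishes this reduction from that of Dvořák et al., who encode brick identity into $D$. Here a single $0/1$ indicator row suffices, and it needs no new global variables. I would also check that large entries in $b_i$ or in the costs are untouched, so that $L$ changes only by the encoding of the minima, and that the construction takes time polynomial in $n\cdot t'$.
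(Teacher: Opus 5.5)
Your proposal is correct and is essentially the paper's construction. The paper also uses a universal column set indexed by $\{-\overline\Delta,\dots,\overline\Delta\}^{m+d}$ with a common $B'$, adds one extra local $0/1$ row (zero right-hand side, zero row in $C_i'$) that forces columns absent from brick $i$ to zero, and assigns each present column the minimum cost among its duplicates. The only differences are that the paper places the indicator row on top rather than at the bottom, and that you spell out the cost-preserving maps in both directions, which the paper states in a single sentence.
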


\begin{proof}
    Given an non-uniform 4-block ILP instance, we construct a new $B$-uniform 4-block ILP with an equal number of bricks. We use primes to denote parts of the constructed instance, which will have $t'=(2\overline\Delta+1)^{m+d}$ local variables and $d'=d+1$ local constraints per brick.

    Associate every vector $(p,q)\in\db m{\overline\Delta}\times\db d{\overline\Delta}$ with a unique index $j\in[t']$. For a brick $i\in[n]$, we construct the $j$-th column of the new instance, $j\in[t']$ corresponding to $(p,q)$, by uniformly setting the coefficients $(B_i')_j=p$. We set $(c_i')_j=(c_i)_{j^*}$ and $(D_i')_j=\binom0q$ if the constraint matrix coefficients $(p,q)$ occur in the original brick and $j^*\in[t]$ indexes the local variable with minimum objective coefficient $(c_i)_{j^*}$ among the local variables $(x_i)_j$ with $(B_i)_j=p,(D_i)_j=q$. Otherwise, if there is no such $j^*$, we set $(c_i')_j=0$ and $(D_i')_j=\binom1q$. The rest of the new brick is constructed by setting $C_i'=\binom{0^\top}{C_i}$ and $b_i'=\binom0{b_i}$. The top left block $A'=A$ is copied.

    The additional local constraint in the construction enforces that $(x_i)_j=0$ if the corresponding column $(p,q)$ does not appear in brick $i$ in the original instance. In this way, the constructed instance corresponds to a reduced version of the original ILP where duplicated columns with worse objective coefficients are removed, which, in turn, is equivalent to the original ILP.
\end{proof}

We note that the construction in \cref{lemma:reduction-to-uniform} is not limited to small coefficients. In particular, when $k$ is the number of distinct coefficients occurring in any $B_i$, we may assume that $t'\le(2\Delta+1)^d\cdot k^m$. This yields a $\O_{dms\Delta k}(n^{s+\O(1)}\cdot t\cdot L^{\O(1)})$ time algorithm for solving 4-block ILPs.

We now prove \cref{theorem:4-block-strongly-xp} by following Step 1 and 4 of the strongly polynomial framework by Koutecký, Levin and Onn~\cite{DBLP:conf/icalp/KouteckyLO18}.

\theoremfourblockstronglyxp*

\begin{proof}
    As a result of \cref{lemma:reduction-to-uniform}, it suffices to restrict ourselves to solving $B$-uniform 4-block ILPs with $t=\O_{dm\soverline\Delta}(1)$ after a preprocessing step using $\O_{dms\soverline\Delta}(nt)$ time. We solve such instance using the same algorithm as that of \cref{theorem:uniform-4-block-xp} while taking care to solve the instances of \cref{ilp:decomposed-4-block} in strongly polynomial time. This is possible because all coefficients are now bounded by $\O_{dms\soverline\Delta}(1)$. To make this explicit, \cref{ilp:decomposed-4-block} can be solved in strongly polynomial time with the following steps where $N=\O_{dms\soverline\Delta}(n)$ is the number of variables of \cref{ilp:decomposed-4-block} and $\Box=\O_{dm\soverline\Delta}(1)$ is the largest absolute value of a coefficient in \cref{ilp:decomposed-4-block}:
    \begin{itemize}
        \item Solve the LP relaxation of \cref{ilp:decomposed-4-block} using a linear programming algorithm that is strongly polynomial when the constraint matrix entries are bounded, for instance, using Tardos' algorithm~\cite{DBLP:journals/ior/Tardos86}.
        \item If the LP relaxation is infeasible, the ILP is also infeasible. Otherwise, if the LP relaxation is unbounded, we are left with checking the feasibility of the ILP or, equivalently, optimizing with $c=0$. Therefore, we may assume that the LP relaxation is bounded.
        \item We can obtain an optimal fractional solution $(z^*,p^*)$ and add additional constraints restricting $(z,p)$ to a bounding box with sides bounded by $Q:=2(N\cdot\Box^NN^{N/2})+1$ using the proximity result by Cook et al.~\cite{DBLP:journals/mp/CookGST86}.
        \item Translate the resulting instance so that the variable lower bounds become zero\footnote{We note that, in order to compute the resulting translated box, we need to round $(z^*,p^*)$ to an integer point. Here, we assume that this can be done in a time that is linear in the number of coordinates, which is standard in the integer programming literature. A similar operation needs to be done when computing the remainder vectors $r_i$.}. The right-hand side vector of the ILP must now have an $\infty$-norm bounded by $\Box NQ$ or the ILP is trivially infeasible.
        \item Since any two solutions to the box-constrained ILP have a $1$-norm distance of at most $NQ$, we may apply Frank and Tardos' coefficient reduction technique~\cite{DBLP:journals/combinatorica/FrankT87} to replace the objective vector $c$ with an objective of $\infty$-norm at most $2^{4N^3}(NQ+1)^{N(N+2)}$ that has the same set of optimizers over the integral solutions to \cref{ilp:decomposed-4-block}.
        \item The encoding lengths of all the numbers in the resulting equivalent ILP instance are bounded by $\log((NQ)^{N^{\O(1)}})=N^{\O(1)}\cdot\log(NQ)=N^{\O(1)}\cdot\log\Box=\O_{dms\soverline\Delta}(n^{\O(1)})$. Therefore, solving the ILP with any (weak) FPT algorithm will give the required running time. Therefore, we again relax the $p$ variables to $\R^{N_\R}$ and solve the resulting MILP with Kannan's algorithm~\cite{DBLP:journals/mor/Kannan87}.
    \end{itemize}
\end{proof}

Note that due to the construction in \cref{lemma:reduction-to-uniform}, the constant hidden in the $\O_{dms\soverline\Delta}$ in \cref{theorem:4-block-strongly-xp} depends triply exponentially on the block dimensions $d$ and $m$.

\section{Dynamic Faithful Decomposition}
\label{sec:dynamic-faithful-decomposition}

This section proves \cref{lemma:dynamic-decomposition}.

\lemmadynamicdecomposition*

The essence of the proof of \cref{lemma:dynamic-decomposition} is showing that by restricting $b$ to a suitable face $F$ of a hyperplane arrangement and restricting it to a lattice translate $r+M\Z^d$, all necessary ingredients to inductively apply the proof of Theorem 4.1 of Cslovjecksek et al.~\cite{DBLP:conf/esa/CslovjecsekEPVW21} are known up front. For a vector $b\in(r+M\Z^d)\cap F$, the proof proceeds as follows:
\begin{itemize}
    \item We identify the collection $\mathcal U$ of bases $U\in\dbb d\Delta$ that conically generate $b$. This is determined by the face $F$.
    \item We identify a linearly independent set of vectors in $\bigcap_{U\in\mathcal U}\intcone U$ that conically generate $b$. This is similarly determined by $F$.
    \item If a conic multiplier in the corresponding conic combination is large, then $b$ faithfully decomposes into the corresponding generator $w_i$ and $b-w_i$. We exhaustively and inductively apply this. Here, the induction base case is represented by a unique root $q$. The root is determined by the guessed lattice translate $r+M\Z^d$ and the subset of conic multipliers that are sufficiently large, which is identified through $F$.
\end{itemize}

We note that performing work for each lattice translate and face of an arrangement has also recently been employed in~\cite{DBLP:conf/soda/BachERW25} to verify forall-exists statements over the integers.

As the induction step of \cref{claim:induction} in the proof of \cref{lemma:dynamic-decomposition} follows the proof of Cslovjecksek et al.~\cite{DBLP:conf/esa/CslovjecsekEPVW21}, we will also need their auxiliary lemma, \cref{lemma:intersection-integer-cone}, that reveals important properties of a cone of the form $\bigcap_{U\in\mathcal U}\cone U$ where $\mathcal U\subseteq\dbb d\Delta$. Here, Cslovjecksek et al.\ show that one can find a set of conic generators $P$ of bounded size that also belong to each of the integer cones $\intcone U$ for $U\in\mathcal U$.

\begin{lemma}[Lemma 4.2 in~\cite{DBLP:conf/esa/CslovjecsekEPVW21}]
    Let $d,\Delta\in\Z_{\ge0}$ be given. Then, there exists numbers $\Psi=2^{\O(d\Delta)^d}$ and $\Phi=\O(d\Delta)^{d^2}$ such that the following holds:  for any subset of generating bases $\mathcal U\subseteq\dbb d\Delta$ for which $\bigcap_{U\in\mathcal U}\cone U\ne\emptyset$, there exist a generating set $P\subseteq\db d\Phi$ of elements with bounded norm such that
    \[
        \cone P=\bigcap_{U\in\mathcal U}\cone U.
    \]
    Furthermore, for any $v\in\Z^d\cap\bigcap_{U\in\mathcal U}\cone U$ it holds that
    \[
        \Psi\cdot v\in\bigcap_{U\in\mathcal U}\intcone U.
    \]
    \label{lemma:intersection-integer-cone}
\end{lemma}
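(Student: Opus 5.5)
The statement splits into two independent parts: a bound on conic generators of $\bigcap_{U\in\mathcal U}\cone U$ (giving $\Phi$), and a scaling statement for integer cones (giving $\Psi$). I would prove both using only the fact that each $\cone U$ for $U\in\dbb d\Delta$ is a simplicial cone, described by integral inequalities coming from $U^{-1}$.

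\emph{Generators.} For a basis $U\in\dbb d\Delta$ we have $\cone U=\{x: U^{-1}x\ge0\}=\{x:\operatorname{adj}(U)x\ge0\}$ after multiplying by $|\det U|$. Here $\operatorname{adj}(U)$ is integral and, by Hadamard's bound on $(d-1)\times(d-1)$ minors, has entries bounded by $\O(d\Delta)^{d}$. Hence $K:=\bigcap_{U\in\mathcal U}\cone U$ is a polyhedral cone $\{x: Nx\ge0\}$. The rows of $N$ are integral with entries bounded by $\O(d\Delta)^d$, and there are at most $d\cdot|\dbb d\Delta|$ of them. Since every $\cone U$ is pointed, $K$ is pointed. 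Therefore $K$ is generated by its extreme rays, and $P=\emptyset$ suffices when $K=\{0\}$.

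Every extreme ray of $K$ is the one-dimensional solution space of a subsystem $N'x=0$, where $N'$ consists of $d-1$ linearly independent rows of $N$, intersected with $K$. By Cramer's rule, this ray is spanned by the integral vector of signed $(d-1)\times(d-1)$ minors of $N'$, possibly negated. By Hadamard's inequality its entries are bounded by $(\sqrt{d}\cdot\O(d\Delta)^d)^{d-1}=\O(d\Delta)^{d^2}=:\Phi$. Collecting one such integral vector per extreme ray gives $P\subseteq\db d\Phi$ with $\cone P=K$.

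\emph{Integer scaling.} Let $v\in\Z^d\cap K$ and $U\in\mathcal U$. Then $v=U\lambda$ with $\lambda=U^{-1}v\ge0$. By Cramer's rule, $\lambda\in\frac{1}{|\det U|}\Z^d$, and $|\det U|\le(\sqrt d\Delta)^d=:N_0$. I would set $\Psi:=\lcm(1,2,\dots,N_0)$. Then $\Psi\lambda\in\Z_{\ge0}^d$ for every $U$ simultaneously, so $\Psi v=U(\Psi\lambda)\in\intcone U$ for all $U\in\mathcal U$. The prime number theorem (or the elementary bound $\lcm(1,\dots,N)\le 4^N$) gives $\Psi\le 4^{N_0}=2^{\O(d\Delta)^d}$, as required.

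\emph{Main obstacle.} The only delicate step is the $\Phi$ bound. There I need to make sure that the extreme rays are cut out by facet normals of the individual simplicial cones, which is automatic since $K$ is described by the union of their inequality systems, and that pointedness holds so that extreme rays generate $K$. Both points are standard. I would also track the exponent carefully so the Hadamard estimate stays within $\O(d\Delta)^{d^2}$.
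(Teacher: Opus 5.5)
Your proof is correct and takes essentially the same approach as the paper. The paper cites this lemma from~\cite{DBLP:conf/esa/CslovjecsekEPVW21} rather than proving it, but it uses the same two ingredients later on: it finds suitably signed extreme-ray generators by enumerating $d-1$ linearly independent tight inequalities of the simplicial cones, and it takes an lcm over all possible determinant values (Hadamard) so that $\Psi U^{-1}$ is integral for every $U$. Two small fixes are needed: $|\det U|\cdot U^{-1}=\operatorname{sign}(\det U)\operatorname{adj}(U)$ rather than $\operatorname{adj}(U)$, and the case $\mathcal U=\emptyset$ (where the intersection is all of $\R^d$, so not pointed) must be handled separately by taking $P=\{\pm e_1,\dots,\pm e_d\}$.
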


We are now ready to prove \cref{lemma:dynamic-decomposition}. The decomposition mechanism is visualized in \cref{fig:dynamic-decomposition}.

\begin{figure}[H]
    \begin{subfigure}[t]{0.45\textwidth}
        \centering
        \includegraphics[width=\textwidth]{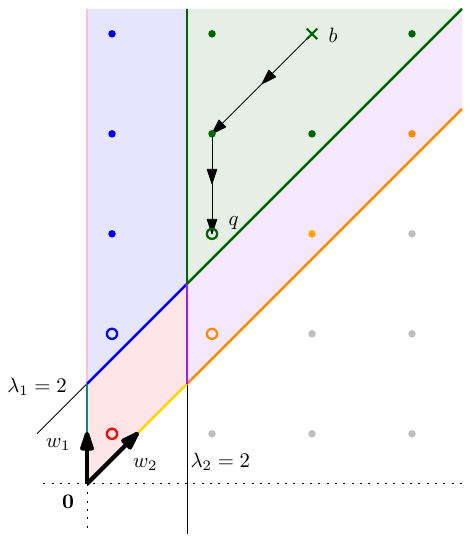}
        \caption{Lattice translate $(1,2)+4\Z^2$. Arrows indicate the dynamic decomposition of the right-hand side $b=(9,18)$ into $q=(5,10)$, $2$ times $w_1$ plus $2$ times $w_2$. Each arrow corresponds to an induction step of \cref{claim:induction}.}
    \end{subfigure}
    \hspace{0.05\textwidth}
    \begin{subfigure}[t]{0.45\textwidth}
        \centering
        \includegraphics[width=\textwidth]{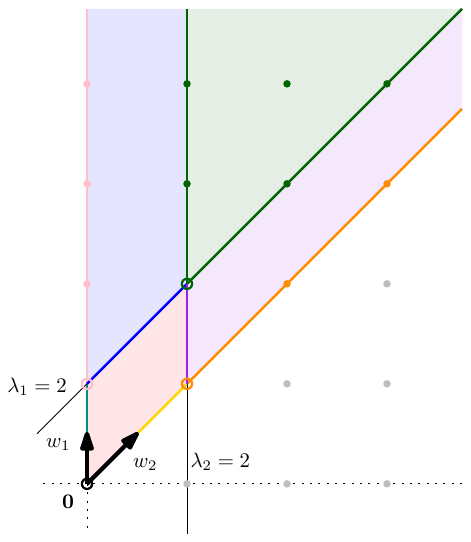}
        \caption{Lattice translate $(0,0)+4\Z^2$.}
    \end{subfigure}
    \caption{Visual representation of a subregion of $\R^2$ in the proof of \cref{lemma:dynamic-decomposition} for $d=2$ dimensions, $t=2$, and $\Delta=1$. The union of the blue, red, purple, and green region corresponds to the set of generating bases
    \[
        \mathcal U=\left\{\begin{pmatrix}1&0\\-1&1\end{pmatrix},\begin{pmatrix}1&0\\0&1\end{pmatrix},\begin{pmatrix}1&-1\\0&1\end{pmatrix},\begin{pmatrix}1&0\\1&1\end{pmatrix},\begin{pmatrix}1&-1\\1&1\end{pmatrix},\begin{pmatrix}1&-1\\1&0\end{pmatrix}\right\}.
    \]
    We have that $\bigcap_{U\in\mathcal U}\cone U$ is generated by $v_1=(0,1),v_2=(1,1)$ and that the multiplier $\Psi=2$ suffices to obtain the generating elements $w_1=2v_1=(0,2),w_2=2v_1=(2,2)$ that are in the intersection of integer cones. Furthermore, because $\det W=2$, it suffices to partition the space into lattice translates of $M\Z^2$ where $M=2\cdot2=4$. The region associated with $\mathcal U$ is divided into $4$ subregions based on whether $\lambda_1\ge2$ and/or $\lambda_2\ge2$. For each such subregion, the root lattice point $q$ is marked with a circle. Note that the (degenerate $1$-dimensional) cyan-pink and yellow-orange cones can each be generated by a strictly larger set of bases and are therefore assigned a different set of generating bases $\mathcal U$.}
    \label{fig:dynamic-decomposition}
\end{figure}

\begin{proofof}{lemma:dynamic-decomposition}
    First, let $\Phi=\O(d\Delta)^{d^2}$ be the norm upper bound on the generating elements $P$ defined in \cref{lemma:intersection-integer-cone} and let $\Psi=2^{\O(d\Delta)^d}$ be the large multiplier defined in \cref{lemma:intersection-integer-cone}. If necessary, we enlarge $\Phi$ such that $1\le\Delta\le\Phi$.

    Due to \cref{claim:standard-matrix}, we may assume that $D$ is full row rank and $t=\O(\Delta)^d$.

    \begin{claimin}
        A faithful decomposition of $b$ w.r.t.\ the matrix with column set $\db d\Delta$ is a faithful decomposition with respect to any $D\in\{-\Delta,-\Delta+1,\dots,\Delta\}^{d\times t}$.
        \label{claim:standard-matrix}
    \end{claimin}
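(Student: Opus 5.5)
Let $\overline D$ denote the $d\times|\db d\Delta|$ matrix whose columns are all vectors of $\db d\Delta$. The plan is to compare $D$ with $\overline D$ column by column. Every column of $D$ lies in $\db d\Delta$, so each column of $D$ equals some column of $\overline D$. Fix a faithful decomposition $b=b_1+\dots+b_\ell$ of $b$ w.r.t.\ $\overline D$ of order $\Xi$. The first three conditions of \cref{def:faithful-decomposition} (the sum identity, conformality $b_i\sqsubseteq b$, and the norm bound $\|b_i\|_\infty\le\Xi$) do not involve the matrix. They therefore carry over verbatim, and only the fourth condition, decomposing solutions, needs an argument.

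For the fourth condition, let $x\in\Z_{\ge0}^t$ satisfy $Dx=b$. Define $\pi\colon[t]\to\db d\Delta$ by sending column index $j$ to the column $D_j$. Aggregate $x$ into $\bar x\in\Z_{\ge0}^{\db d\Delta}$ by setting $\bar x_v=\sum_{j:\pi(j)=v}x_j$. Then $\overline D\bar x=Dx=b$, so faithfulness w.r.t.\ $\overline D$ gives $\bar x=\bar x_1+\dots+\bar x_\ell$ with $\bar x_i\in\Z_{\ge0}^{\db d\Delta}$ and $\overline D\bar x_i=b_i$. Note that $\bar x_i$ may be positive on columns $v$ not appearing in $D$. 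However, $0\le(\bar x_i)_v\le\bar x_v$, and $\bar x_v=0$ whenever $v$ is not a column of $D$, so this cannot happen.

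The remaining step is to disaggregate, and this is the only real obstacle: each $\bar x_i$ must be split back into the original coordinates of $x$ so that the pieces sum exactly to $x$. For each $v\in\db d\Delta$ we have the two integer partitions $\bar x_v=\sum_{j\in\pi^{-1}(v)}x_j$ and $\bar x_v=\sum_i(\bar x_i)_v$. A greedy filling handles this: list the $\bar x_v$ units as $(\bar x_1)_v$ units, then $(\bar x_2)_v$ units, and so on, and assign consecutive runs of them to indices $j\in\pi^{-1}(v)$ according to $x_j$. This yields nonnegative integers $(x_i)_j$ with $\sum_i(x_i)_j=x_j$ and $\sum_{j\in\pi^{-1}(v)}(x_i)_j=(\bar x_i)_v$.

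Consequently $Dx_i=\sum_v v\,(\bar x_i)_v=\overline D\bar x_i=b_i$, each $x_i\in\Z_{\ge0}^t$, and $x=\sum_i x_i$. This is exactly the fourth condition of \cref{def:faithful-decomposition} for $D$, which proves the claim. It also justifies the reduction in the surrounding proof to the single matrix $\overline D$, which has $t=\O(\Delta)^d$ columns and contains the unit vectors, hence has full row rank when $\Delta\ge1$.
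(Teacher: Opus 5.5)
Your proof is correct and follows essentially the paper's approach. The paper reduces to two elementary operations, adding a column (pad with a zero, decompose, project) and merging a duplicated column (aggregate, decompose, split greedily), whereas you do both at once via the aggregation map $\pi$ and a greedy split over each fiber $\pi^{-1}(v)$, and you make explicit that the pieces $\bar x_i$ vanish on columns absent from $D$, which the paper's projection step leaves implicit.
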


    \begin{proof}
        It suffices to show that faithful decompositions, that are crafted with respect to modified matrices by adding columns or removing duplicate columns, are also faithful decompositions with respect to the original matrix. Let $D,D'$ be matrices with integral coefficients bounded by $\Delta$ in absolute value and let $b=b_1+\dots+b_\ell$ be a faithful decomposition of some $b\in\Z^d$ w.r.t.\ $D'$.
        
        First, consider the case where a matrix $D'\in\Z^{d\times(t+1)}$ arises from a matrix $D\in\Z^{d\times t}$ by adding a $(t+1)$-th column to $D$. If $x_0\in\Z_{\ge0}^t$ is a solution to $Dx_0=b$, then $x_0'\in\Z_{\ge0}^{t+1}$ obtained by padding $x_0$ with a zero in the $(t+1)$-th coordinate is a solution to $D'x_0'=b$, which decomposes as $x_0'=x_1'+\dots+x_\ell'$, where $D'x_i'=b_i,x_i'\in\Z_{\ge0}^{t+1}$. By letting $x_i\in\Z_{\ge0}^t$ be the projection of $x_i'$ on the first $t$ coordinates, it follows that $x_0=x_1+\dots+x_\ell$ with $Dx_i=b_i$.

        Second, consider the case where $D\in\Z^{d\times(t+1)}$ has identical $t$-th and $(t+1)$-th columns and $D'\in\Z^{d\times t}$ consists of the first $t$ columns of $D$. In this case, a solution $x_0\in\Z_{\ge0}^{t+1}$ to $Dx_0=b$ can be aggregated into a solution $x_0'\in\Z_{\ge0}^t$ to $D'x_0'=b$ by setting $(x_0)_j'=(x_0)_j$ for $j\in[t-1]$ and $(x_0')_t=(x_0)_t+(x_0)_{t+1}$. Consider solutions $x_i'\in\Z_{\ge0}^t$ in a decomposition $x_0'=x_1'+\dots+x_\ell'$, where $D'x_i'=b_i$. We reverse the aggregation operation by copying $(x_i)_j=(x_i')_j$ for $j\in[t-1]$ and setting the other two components through
        \[
            \begin{pmatrix}(x_i)_t\\(x_i)_{t+1}\end{pmatrix}=
            \begin{cases}
                \begin{pmatrix}(x_i')_t\\0\end{pmatrix},&\text{if $\sum_{j=1}^i(x_j')_t\le x_t$,}\\[10pt]
                \begin{pmatrix}x_t-\sum_{j=1}^{i-1}(x_j')_t\\\sum_{j=1}^i(x_j')_t-x_t\end{pmatrix},&\text{if $\sum_{j=1}^{i-1}(x_j')_t\le x_t<\sum_{j=1}^i(x_j')_t$,}\\[10pt]
                \begin{pmatrix}0\\(x_i')_t\end{pmatrix},&\text{if $x_t<\sum_{j=1}^{i-1}(x_j')_t$.}
            \end{cases}
        \]
        By construction, the vectors $x_i\in\Z_{\ge0}^{t+1}$ are solutions to $Dx_i=b_i$. Therefore, we obtain the decomposition $x_0=x_1+\dots+x_\ell$.
    \end{proof}
    
    For every basis $\overline V\in\dbb d\Phi$ of vectors in $\db d\Phi$, we add $2d$ hyperplanes to $H$. In particular, for each $i\in[d]$, we add the two hyperplanes
    \[
        h_{\overline Vi}^0=((\soverline V^{-1}b)_i,0)\text{ and }h_{\overline Vi}^+=((\soverline V^{-1}b)_i,t\cdot\Psi)
    \]
    to $H$. We have that
    \[
        |H|\le2d\cdot\binom{|\db d\Phi|}d\le2d\cdot|\db d\Phi|^d=2d\cdot\left(\left(2\cdot\O(d\Delta)^{d^2}+1\right)^d\right)^d=\O(d\Delta)^{d^4}.
    \]
    
    Let us fix an arbitrary face $F\in\mathcal F_H$ with position vector $\rel\in\signs^H$. Before describing the construction of $\mathcal D_{r,F}$, we prove some useful properties that hold uniformly for all vectors $b$ in the face.
    
    \begin{claimin}
        We can identify a set of generating bases $\mathcal U\subseteq\dbb d\Delta$ so that $F\subseteq\bigcap_{U\in\mathcal U}\cone U$. Furthermore, for all $b\in F$ and $U\in\dbb d\Delta$, we have that $b\in\cone U$ implies that $U\in\mathcal U$.
        \label{claim:generating-bases}
    \end{claimin}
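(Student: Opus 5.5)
The idea is that every basis $U\in\dbb d\Delta$ is also a basis in $\dbb d\Phi$, because $\Delta\le\Phi$ after the enlargement at the start of the proof. So the hyperplanes $h_{Ui}^0$ for $i\in[d]$ were added to $H$, and their signs on $F$ decide whether $b\in\cone U$. Note that the normal vector of $h_{Ui}^0$ is the $i$-th row of $U^{-1}$ and its offset is $0$, so a point $b$ lies on the side $\rel_{h_{Ui}^0}$ of $h_{Ui}^0$ exactly when $(U^{-1}b)_i\mathrel{\rel_{h_{Ui}^0}}0$.

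The first step is the basic equivalence for an invertible $U$:
\[
b\in\cone U\iff b=U\lambda \text{ for some }\lambda\in\R_{\ge0}^d\iff U^{-1}b\ge0,
\]
since $\lambda=U^{-1}b$ is the unique coefficient vector. Hence, for every $b\in\R^d$, membership $b\in\cone U$ holds if and only if for each $i\in[d]$ the position of $b$ relative to $h_{Ui}^0$ is $>$ or $=$, and not $<$.

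Next, define $\mathcal U$ from the position vector $\rel$ of $F$ alone:
\[
\mathcal U=\bigl\{U\in\dbb d\Delta:\ \rel_{h_{Ui}^0}\in\{>,=\}\text{ for all }i\in[d]\bigr\}.
\]
This set can be read off directly from $\rel$, which is what "identify" means here. By definition of a face, every $b\in F$ satisfies $(U^{-1}b)_i\mathrel{\rel_{h_{Ui}^0}}0$. For $U\in\mathcal U$ this gives $U^{-1}b\ge0$, hence $b\in\cone U$. Therefore $F\subseteq\bigcap_{U\in\mathcal U}\cone U$. If $F=\emptyset$, the inclusion holds trivially; if moreover $\mathcal U=\emptyset$, we read the empty intersection as $\R^d$.

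For the converse, take $b\in F$ and $U\in\dbb d\Delta$ with $b\in\cone U$. Then $(U^{-1}b)_i\ge0$ for every $i$. The relation $\rel_{h_{Ui}^0}$ is the relation that $b$ satisfies with respect to $h_{Ui}^0$, because all points of $F$ share the same position vector. So $\rel_{h_{Ui}^0}\in\{>,=\}$ for all $i$, and therefore $U\in\mathcal U$.

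There is no real obstacle in this argument. The only points needing care are that the bases in $\dbb d\Delta$ are indeed among those used to build $H$, which follows from $\Delta\le\Phi$, and that the position of a point relative to each hyperplane is constant on a face.
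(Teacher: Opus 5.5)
Your proposal is correct and follows essentially the same argument as the paper. Like the paper, you use $\Delta\le\Phi$ to ensure the hyperplanes $h_{Ui}^0$ exist, define $\mathcal U$ by the condition $\rel_{h_{Ui}^0}\in\{>,=\}$ for all $i$, obtain $F\subseteq\cone U$ from $U^{-1}b\ge0$, and derive the converse from the fact that the position vector is constant on the face $F$.
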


    \begin{claimproof}
        Since $\Delta\le\Phi$, we have that any generating basis $U\in\dbb d\Delta$ appears as $\overline V=U$. We take $\mathcal U$ to be the set of all $U\in\dbb d\Delta$ so that $\rel_{h_{Ui}^0}\in\{>,=\}$ for all $i\in[d]$, i.e., $U^{-1}F\ge0$. This immediately implies that $F\subseteq\cone U$ for all $U\in\mathcal U$. Furthermore, if $b\in\cone U$ for some $b\in F$, then $U^{-1}b\ge0$ and thus $U^{-1}F\ge0$, due to the uniformity of the orientation of the face $F$ w.r.t.\ the hyperplanes. This shows that $U\in\mathcal U$.
    \end{claimproof}

    \begin{claimin}
        We can find a set of linearly independent generating vectors $V=\{v_1,\dots,v_\ell\}\subseteq\db d\Phi$ such that $F\subseteq\cone V\subseteq\bigcap_{U\in\mathcal U}\cone U$.
        \label{claim:big-generators}
    \end{claimin}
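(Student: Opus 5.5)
We plan to apply \cref{lemma:intersection-integer-cone} to the set $\mathcal U$ from \cref{claim:generating-bases}, then refine the generating set $P$ to a linearly independent subset $V$ using Carathéodory's theorem. The face $F$ fixes the orientation of every hyperplane $h^0_{\overline Vi}$, and this uniformity should let us pick one subset that works for all $b\in F$ at once.

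First, if $F=\emptyset$ there is nothing to show, so we assume $F\neq\emptyset$. By \cref{claim:generating-bases}, $\emptyset\neq F\subseteq\bigcap_{U\in\mathcal U}\cone U$, so the intersection is nonempty. (If $\mathcal U=\emptyset$ the intersection is interpreted as $\R^d$, and we can take $P=\{\pm e_j\}$.) \Cref{lemma:intersection-integer-cone} then gives $P\subseteq\db d\Phi$ with $\cone P=\bigcap_{U\in\mathcal U}\cone U$. Every linearly independent subset $V\subseteq P$ satisfies $\cone V\subseteq\cone P$, which is the right inclusion.

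The main obstacle is the left inclusion $F\subseteq\cone V$ for a single $V$. Carathéodory gives, for each $b\in F$, a linearly independent $V_b\subseteq P$ with $b\in\cone V_b$, but $V_b$ may vary with $b$. We use the hyperplanes to make the choice uniform.

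Extend any linearly independent $V'\subseteq P$ to a basis $\overline V\in\dbb d\Phi$ by adding unit vectors, whose norm is $1\le\Phi$. Then $b\in\cone V'$ holds exactly when $(\overline V^{-1}b)_i\ge0$ for the indices $i$ of $V'$ and $(\overline V^{-1}b)_i=0$ for the added indices. The first condition is read off the signs $\rel_{h^0_{\overline Vi}}\in\{>,=\}$, since the sign of $(\overline V^{-1}b)_i$ is constant on $F$. The zero condition needs one refinement: instead of padding with unit vectors, we work inside $\operatorname{span} V'$. Membership in that span is itself encoded by the hyperplanes $h^0$ for the padding coordinates, which are constant on $F$ as long as the padding vectors are fixed choices in $\db d\Phi$.

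Hence, for each candidate $V'$, the statement ``$b\in\cone V'$'' is either true for all $b\in F$ or false for all of them. Fix any $b^\ast\in F$ and take $V=V_{b^\ast}$ from Carathéodory. By the uniformity just argued, $F\subseteq\cone V$. Linear independence gives $\ell\le d$, and $V\subseteq P\subseteq\db d\Phi$ gives the norm bound.

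Finally, the selection is algorithmic. We enumerate subsets of $P$, or equivalently bases $\overline V\in\dbb d\Phi$, and test the position vector $\rel$. This stays within the stated time budget.
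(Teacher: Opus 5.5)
Your proof is correct and is essentially the paper's argument. You apply \cref{lemma:intersection-integer-cone} to get $P$, use Carath\'eodory at a single point of $F$ to get a linearly independent $V\subseteq P$, pad with unit vectors to a basis $\overline V\in\dbb d\Phi$, and use that $F$ has a fixed position relative to the hyperplanes $h_{\overline Vi}^0$ to conclude $F\subseteq\cone V$. The ``refinement'' about working inside the span of $V'$ is superfluous, since the unit-vector padding already encodes the zero conditions through those same hyperplanes.
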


    \begin{claimproof}
        Using \cref{lemma:intersection-integer-cone}, there exists a generating set $P\subseteq\db d\Phi$ such that $F\subseteq\cone P=\bigcap_{U\in\mathcal U}\cone U$. Let $b$ be an arbitrary vector in $F$. By Caratheodory's theorem, there exists a linearly independent set $V=\{v_1,\dots,v_\ell\}\subseteq P\subseteq\db d\Phi$ so that $b\in\cone V$, i.e., there are strictly positive multipliers $(\mu_i)_{i\in[\ell]}$ so that
        \[
            b=\sum_{i\in[\ell]}\mu_i\cdot v_i.
        \]
        By adding unit vectors, we can extend $V$ to a basis $\overline V$ of vectors in $\db d\Phi$. As $b=\overline V(\soverline V^{-1})b=V\mu$, we have that $(\soverline V^{-1}b)_i=\mu_i>0$ for $i\in[\ell]$ and $(\soverline V^{-1}b)_i=0$ for $i\in[d]\setminus[\ell]$. As the sign of these coefficients $(\soverline V^{-1}b)_i$ is determined uniformly in $F$, we have that any $b\in F$ can written as a conic combination of elements in $V$ and, consequently, that $F\subseteq\cone V$.
    \end{claimproof}

    Let $w_i=\Psi\cdot v_i$ for $i\in[\ell]$. It will be of interest to inspect the scaled conic multipliers $\lambda_i:=\mu_i/\Psi>0$ in the conic combination $b=\sum_{i\in[\ell]}\lambda_i\cdot w_i$. Let $r\in\{0,1,\dots,M-1\}^d$ be a fixed remainder vector. We will pick a suitable modulus in the next argument.

    \begin{claimin}
        The remainder of $\alpha_i:=\lambda_i\bmod t$ of the conic multipliers $\lambda_i$ for $i\in[\ell]$ is constant for all $b\in(r+M\Z^d)\cap F$ and can be computed.
        \label{claim:fixed-remainder}
    \end{claimin}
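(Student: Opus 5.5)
Within $F$ the set $V$ from \cref{claim:big-generators} is fixed, and we extend it by unit vectors to a basis $\overline V\in\dbb d\Phi$ that is also fixed for the face. For every $b\in F$ this gives $\mu=(\overline V^{-1}b)_{[\ell]}$ and hence
\[
\lambda=\tfrac1\Psi(\overline V^{-1}b)_{[\ell]},
\]
a fixed linear map of $b$. Write $\overline V^{-1}=\operatorname{adj}(\overline V)/\det\overline V$. The adjugate $\operatorname{adj}(\overline V)$ is an integral matrix, and $|\det\overline V|\le \Phi^d d^{d/2}$ by Hadamard's inequality. Therefore
\[
\lambda_i=\frac{(\operatorname{adj}(\overline V)\,b)_i}{\Psi\det\overline V}.
\]
Each $\lambda_i$ is a rational number with the fixed denominator $N:=\Psi\cdot|\det\overline V|$, and its numerator is an integer linear form in $b$.

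The modulus is chosen as $M:=t\cdot\Psi\cdot\lcm\{|\det\overline V|:\overline V\in\dbb d\Phi\}$. The lcm runs over integers at most $\Phi^dd^{d/2}$, so it is at most $\bigl(\Phi^dd^{d/2}\bigr)!$. With $\Phi=\O(d\Delta)^{d^2}$ this is $2^{\O(d\Delta)^{d^3}}$, and $t\le\O(\Delta)^d$ by \cref{claim:standard-matrix}. Together these give the promised bound $M=2^{\O(d\Delta)^{d^3}}$. The point of this choice is that $M$ is a multiple of $tN$ simultaneously for every basis $\overline V$ that a face can select.

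Now let $b,b'\in(r+M\Z^d)\cap F$, so $b-b'=Mz$ for some $z\in\Z^d$. Then
\[
\lambda_i(b)-\lambda_i(b')=\frac{M}{N}\,(\operatorname{adj}(\overline V)z)_i\in t\Z,
\]
because $M/N$ is an integer multiple of $t$ and $\operatorname{adj}(\overline V)z$ is integral. Hence $\lambda_i\bmod t$ (with $\lambda_i\in\Q$ and the remainder taken in $[0,t)$) is the same for all $b$ in the domain. To compute $\alpha_i$ we evaluate $\lambda_i(b^*)\bmod t$ at any single representative $b^*$ of the class. If $F$ is not full-dimensional, $r$ itself may lie outside $F$, but that does not matter: the computation above shows that $\lambda_i(b)\bmod t$ is constant on the whole class $r+M\Z^d$ as a linear-algebra identity, so we may take $b^*=r$. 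The matrix $\overline V$ is read off from the position vector of $F$. All of this costs a constant amount of rational arithmetic per face.

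I expect the main obstacle to be keeping this consistent with the modulus declared in \cref{lemma:dynamic-decomposition}. $M$ must not depend on $F$, which is why the lcm over all bases is needed, and the factorial bound on that lcm must stay within $2^{\O(d\Delta)^{d^3}}$. A further point to check is that when $\ell<d$ only the first $\ell$ coordinates of $\overline V^{-1}b$ are used. This is harmless, since the remaining coordinates vanish identically on $F$.
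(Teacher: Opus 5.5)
Your argument is correct and is essentially the paper's: both write $\lambda=\tfrac1\Psi(\overline V^{-1}b)_{[\ell]}$, use Hadamard's bound on $|\det\overline V|$ to pick $M$ as $t\Psi$ times an lcm so that $M\overline W^{-1}\in t\Z^{d\times d}$ for every possible basis, and read off $\alpha_i=(\overline W^{-1}r)_i\bmod t$ from the single representative $r$. The one quantitative slip is bounding the lcm by $K!$ with $K=\Phi^dd^{d/2}$. Since $\log K!=\Theta(K\log K)$, this overshoots the stated $M=2^{\O(d\Delta)^{d^3}}$ by a $\log K$ factor in the exponent, so use $\lcm\{1,\dots,K\}\le3^K$ (Hanson) as the paper does.
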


    \begin{claimproof}
        Set $\overline W=\Psi\cdot\overline V$, i.e., the first $\ell$ vectors of $\overline W$ are $w_1,\dots,w_\ell$ and the last vectors are the $\Psi$-scaled dummy vectors to obtain a basis. It is known that $\soverline W^{-1}=\tfrac1\Psi\cdot\soverline V^{-1}\in\tfrac1{\Psi\cdot\det\overline V}\cdot\Z^{d\times d}$. We may now make a similar observation to that made in~\cite{DBLP:conf/esa/CslovjecsekEPVW21} which the authors use to derive \cref{lemma:intersection-integer-cone}. As $|\det\overline V|\le\Phi^dd^{d/2}$ by the Hadamard bound, we may observe that if we take $M=t\cdot\Psi\cdot\lcm\{1,2,\dots,\Phi^dd^{d/2}\}$, it holds that $M\cdot\soverline W^{-1}\in t\Z^{d\times d}$ independently of $\overline W$. This modulus is bounded by $M\le t\cdot\Psi\cdot3^{\Phi^dd^{d/2}}=\O(\Delta)^d\cdot2^{\O(d\Delta)^d}\cdot3^{(\O(d\Delta)^{d^2})^dd^{d/2}}=2^{\O(d\Delta)^{d^3}}$~\cite{Hanson_1972}.

        Therefore, if $r\in\{0,1,\dots,M-1\}^d$ is a fixed remainder vector and $b$ is some vector in the lattice translate $r+M\Z^d$, it holds that $\soverline W^{-1}b\in\soverline W^{-1}(r+M\Z^d)\subseteq\soverline W^{-1}r+(M\cdot\soverline W^{-1})\Z^d\subseteq\soverline W^{-1}r+t\Z^d$. Thus, for all $i\in[\ell]$, the remainder of $\lambda_i$ modulo $t$ is known to be fixed to $\lambda_i\bmod t=(\soverline W^{-1}r)_i\bmod t$. Note that this remainder may be fractional.
    \end{claimproof}

    Since we also know the position of $F$ relative to the hyperplanes $h_{\overline Vi}^+$, i.e., it is known whether $(\soverline V^{-1}b)_i\ge t\cdot\Psi\iff(\soverline W^{-1}b)_i\ge t$, we can distinguish the following two cases for every $i\in[\ell]$, uniformly over all vectors $b$ in the domain $(r+M\Z^d)\cap F$: either $\lambda_i=\alpha_i$; or $\lambda_i\ge t$ and, if we define $\gamma_i=t+\alpha_i$, the difference $z_i=\lambda_i-\gamma_i\in t\Z_{\ge0}$ is a nonnegative integer.
    
    Let $S\subseteq[\ell]$ be the set of directions $i\in[\ell]$ for which $\lambda_i=\alpha_i$, i.e., $F$ is not deep in $\cone V$ in the direction of $w_i$. Let $q=\sum_{i\in S}\alpha_i\cdot w_i+\sum_{i\in[\ell]\setminus S}\gamma_i\cdot w_i$, which, as argued, only depends on $r$ and $F$, but not on $b$. We are now ready to define the dynamic faithful decomposition $\mathcal D_{r,F}:\R^d\to\R^{R_{r,F}}$ by
    \[
        \mathcal D_{r,F}(b)=1\cdot e_q+\sum_{i\in[\ell]\setminus S}z_i\cdot e_{w_i},
    \]    
    where $e_{b'}$ is the $b'$-th standard basis vector, i.e., $e_{b'}(b')=1$ and $e_{b'}(b'')=0$ for $b''\ne b'$. That is, we decompose $b$ into $z_i$ times $w_i$ for $i\in[\ell]\setminus S$ and one times the root $q$. Note that $\|w_i\|_\infty=\Psi\cdot\|v_i\|\le\Psi\cdot\Phi=2^{\O(d\Delta)^d}\cdot\O(d\Delta)^{d^2}=2^{\O(d\Delta)^d}$ for all $i\in[\ell]$ and that
    \[
        \|q\|_\infty\le\sum_{i\in S}\alpha_i\|w_i\|_\infty+\sum_{i\in[\ell]\setminus S}\gamma_i\|w_i\|_\infty\le\sum_{i\in[\ell]}2t\|w_i\|_\infty\le d\cdot2t\cdot2^{\O(d\Delta)^d}=2^{\O(d\Delta)^d}.
    \]
    Therefore, the right-hand side vectors in the faithful decomposition $\mathcal D_{r,F}$ are within $R_{r,F}=\{q,w_1,\dots,w_\ell\}\subseteq\db d\Xi$ for $\Xi=2^{\O(d\Delta)^d}$ and $|R_{r,F}|\le d+1$. Note that the function is affine in $b$ as $z_i=\lambda_i-\gamma_i=(\soverline W^{-1}b)_i-\gamma_i$.

    From now on, let $b\in(r+M\Z^d)\cap F$ be fixed.
    Observe that the generating basis $U\in\dbb d\Delta$ which has $i$-th column $e_i$ if $b_i\ge0$ and $-e_i$ if $b_i<0$ for $i\in[d]$ is included in $\mathcal U$, as it generates an orthant that contains $b$. Therefore, all generators $w_i$ are sign-compatible with $b$, showing that the decomposition into elements of $R_{r,F}$ is sign-compatible.

    It is left to show that the decomposition matches the remaining properties in \cref{def:faithful-decomposition}. We prove this with induction on decreasing $\|z\|_1\in\Z_{\ge0}$ in \cref{claim:induction}. For this, we need the result from \cref{claim:stay-in-the-cone}.

    \begin{claimin}
        Let $\nu_i\in\R_{>0}^\ell$ be strictly positive multipliers and $p=\sum_{i\in[\ell]}\nu_i\cdot w_i$. Let $U\in\dbb d\Delta$ and suppose that $p\in\cone U$. Then, it holds that $U\in\mathcal U$.
        \label{claim:stay-in-the-cone}
    \end{claimin}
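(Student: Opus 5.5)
The plan is to argue by contradiction and exploit the fact that $F$ is a single face of the arrangement $H$. Fix $U\in\dbb d\Delta$ with $p\in\cone U$, and write $f_j(x)=(U^{-1}x)_j$ for $j\in[d]$. Since $\Delta\le\Phi$, the basis $U$ occurs as some $\overline V$ in the construction of $H$. Hence each hyperplane $h_{Uj}^0=\{x:f_j(x)=0\}$ lies in $H$, and the sign of $f_j$ is constant on $F$.

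By \cref{claim:generating-bases}, $U\in\mathcal U$ is equivalent to $f_j\ge0$ on $F$ for all $j$, which is in turn equivalent to $f_j(b)\ge0$ for one fixed $b\in F$. So suppose some $j$ has $f_j<0$ on all of $F$ while $f_j(p)\ge0$. We want to derive a contradiction.

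First step. By \cref{claim:big-generators} and its proof, every $b\in F$ can be written as $b=\sum_{i\in[\ell]}\mu_i v_i$ with $\mu_i>0$ for all $i$. The positivity is uniform because the signs of the coordinates $(\soverline V^{-1}b)_i$ are fixed on $F$. Thus both $F$ and $p$ lie in the relative interior of the simplicial cone $\cone V$. Since $f_j$ is linear, $f_j(p)\ge0>f_j(b)$ forces $f_j(w_i)>0$ for some $i$, and $f_j(w_{i'})<0$ for some other $i'$.

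Second step. I want to show that $\cone V$ cannot be cut by $h_{Uj}^0$ inside its relative interior. Equivalently, I want $f_j$ to have constant sign on all of $\cone V\setminus\{0\}$, except where the sign is forced to be zero. Here I would use the freedom in choosing $V$ in \cref{claim:big-generators}. Apply \cref{lemma:intersection-integer-cone} not only to $\mathcal U$ but, if needed, to the enlarged family $\mathcal U\cup\{U\}$, whose cones have nonempty intersection because $p$ lies in it. This yields bounded generators $P'\subseteq\db d\Phi$ of $\cone P'=\cone U\cap\bigcap_{U'\in\mathcal U}\cone U'$.

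Now pick a point $y$ on the segment $[b,p]$ with $f_j(y)=0$. Apply Carathéodory to $y$ with respect to $P'$, and extend the resulting linearly independent vectors by unit vectors to a basis $\overline V'\in\dbb d\Phi$. The hyperplanes $h_{\overline V'i}^0$ are also in $H$. I then intend to show that the positions of $b$ and $p$ relative to these hyperplanes, combined with the constant sign pattern on $F$, give $F\cap\cone P'\neq\emptyset$. That would yield $f_j\ge0$ somewhere on $F$, which is the desired contradiction.

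Main obstacle. The second step is where I expect the real difficulty. A single vector $p$ with $f_j(p)\ge0$ does not by itself control $f_j$ on the generators $w_i$. The argument therefore has to use that the arrangement $H$ contains the hyperplanes of every bounded basis, so that $F$ is refined finely enough. Concretely, one would like $\cone V$ to be exactly the minimal cone of the form $\bigcap\cone U'$ containing $F$; then any $U$ containing an interior point of $\cone V$ would contain the whole cone. If the direct crossing argument fails, I would first try to prove exactly this minimality. This means strengthening \cref{claim:big-generators} so that $V$ spans the smallest face of $\cone P$ containing $F$, and checking that this face is cut out by hyperplanes of $H$. The claim then follows, since a relative-interior point of that face lying in $\cone U$ forces the whole face into $\cone U$.
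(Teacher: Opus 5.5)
Your first step is correct, and it reduces the claim to exactly the right statement: you must show that no coordinate functional $f_j(x)=(U^{-1}x)_j$ can be negative on some $w_{i'}$ and positive on another $w_i$. Your second step never establishes this, so there is a genuine gap.

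\begin{itemize}
\item The route via \cref{lemma:intersection-integer-cone} applied to $\mathcal U\cup\{U\}$ does not work as stated. The point $y\in[b,p]$ with $f_j(y)=0$ need not lie in $\cone U$, since other coordinates $f_{j'}(y)$ may be negative. So Carath\'eodory with respect to $P'$ may not even apply. You also give no mechanism by which the hyperplanes of $\overline V'$ would force $F\cap\cone P'\neq\emptyset$.
\item The fallback via a minimal face rests on the assertion that a relative-interior point of a face lying in $\cone U$ forces the whole face into $\cone U$. For a general face and a simplicial cone this is false. It is precisely the constant-sign property you are trying to prove.
\end{itemize}

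The missing idea is a sign flip, which uses that $\dbb d\Delta$ is closed under negating columns.
\begin{itemize}
\item Fix $b\in F$ and let $U'$ be $U$ with the columns $j$ where $(U^{-1}b)_j<0$ negated.
\item Then $U'\in\dbb d\Delta$ and $b\in\cone U'$. So $U'\in\mathcal U$ by \cref{claim:generating-bases}, and hence $w_i\in\cone V\subseteq\cone U'$ for all $i$ by \cref{claim:big-generators}.
\item Since $(U'^{-1}x)_j=\pm(U^{-1}x)_j$, each $f_j$ has weakly constant sign on all of $\cone V$. This is exactly the statement your second step was after, and it immediately contradicts the mixed signs found in your first step.
\end{itemize}

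Concretely, suppose $f_j(b)<0$.
\begin{itemize}
\item Then $f_j(w_i)\le 0$ for all $i$.
\item Because $b=\sum_i\lambda_iw_i$ with all $\lambda_i>0$, some $f_j(w_{i'})<0$.
\item Hence $f_j(p)=\sum_i\nu_if_j(w_i)\le\nu_{i'}f_j(w_{i'})<0$, using $\nu_i>0$. This contradicts $p\in\cone U$.
\end{itemize}
This is essentially the paper's argument. It needs no additional hyperplanes, no second application of \cref{lemma:intersection-integer-cone}, and no refinement of $V$.
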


    \begin{claimproof}
        Suppose for a contradiction that $b\notin\cone U$. Let $Y\ne\emptyset$ be the set of indices $j\in[d]$ such that $(U^{-1}b)_j<0$ and construct another distinct basis $U'\ne U$ by flipping the sign of the basis vectors indexed by $Y$. In this way, $b\in\cone U'$. If there exists an index $j\in Y$ so that also $(U^{-1}p)_j>0$, we find that $p\notin\cone U'$. By using \cref{claim:generating-bases} and $b\in\cone U'$, we find that $p\in\cone V\subseteq\bigcap_{U''\in\mathcal U}\cone U''\subseteq\cone U'$, which is a contradiction. Therefore, for all $i\in Y$, it must hold that $(U'^{-1}p)_j=0$ and $(U'^{-1}b)_j>0$. Since $w_i\in\cone V\subseteq\cone U'$ for all $i\in[\ell]$, we have that $(U'^{-1}w_i)\ge0$. Combining this with $\nu_i>0$ for all $i\in[\ell]$ and
        \[
            \sum_{i\in[\ell]}\nu_i\cdot\sum_{j\in Y}(U'^{-1}w_i)_j=\sum_{j\in Y}\Bigl(U'^{-1}\sum_{i\in[\ell]}\nu_i\cdot w_i\Bigr)_j=\sum_{j\in Y}(U'^{-1}b)_j>0
        \]
        shows that there must be an $i\in[\ell]$ and $j\in Y$ such that $(U'^{-1}w_i)_j>0$. However, then it follows that
        \[
            (U'^{-1}p)_j=\Bigl(U'^{-1}\sum_{i\in[\ell]}\nu_i\cdot w_i\Bigr)_j=\sum_{i\in[\ell]}\nu_i\cdot(U'^{-1}w_i)_j>0,
        \]
        which is a contradiction again. Therefore, $b\in\cone U$ and $U\in\mathcal U$ by \cref{claim:generating-bases}.
    \end{claimproof}

    We are now ready to prove \cref{claim:induction} with induction. Here, we will follow the proof of Theorem 4.1 of~\cite{DBLP:conf/esa/CslovjecsekEPVW21}.

    \begin{claimin}
        $\mathcal D_{r,F}(b)$ is a faithful decomposition of $b$ w.r.t.\ $D$.
        \label{claim:induction}
    \end{claimin}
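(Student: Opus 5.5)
We argue by induction on $\|z\|_1$, where $z=(z_i)_{i\in[\ell]\setminus S}$ with $z_i=\lambda_i-\gamma_i\in t\Z_{\ge0}$, and the intermediate vectors are $p=q+\sum_{i\notin S}z_i'w_i$ for $0\le z'\le z$ coordinatewise with $z_i'\in t\Z_{\ge0}$. The stronger statement we prove is: every solution $x\in\Z_{\ge0}^t$ to $Dx=p$ decomposes as one solution to $Dx=q$ plus $z_i'$ solutions to $Dx=w_i$ for each $i\notin S$. Applying this with $z'=z$ gives $p=b$ and the claim. The first three properties of \cref{def:faithful-decomposition} (sum, conformality, norm bound $\Xi$) were already established above, so only the solution-decomposition property remains. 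By \cref{claim:standard-matrix} we may take $D$ to have column set $\db d\Delta$.

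\textbf{Base case} ($z'=0$). Here $p=q$, and the required decomposition is $x$ itself, with multiplicity $1$ for $q$.

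\textbf{Induction step.} Suppose $z_j'\ge t$ for some $j$, and let $x$ solve $Dx=p$. We follow the proof of Theorem 4.1 of~\cite{DBLP:conf/esa/CslovjecsekEPVW21}. The vector $p$ equals $\sum_{i\in S}\alpha_i w_i+\sum_{i\notin S}(\gamma_i+z_i')w_i$. Every multiplier here is strictly positive: we have $\alpha_i=\lambda_i>0$ for $i\in S$, and $\gamma_i\ge t>0$ otherwise. So \cref{claim:stay-in-the-cone} applies. Since $x$ is a nonnegative combination of columns of $D$, Carathéodory gives a basis $U\in\dbb d\Delta$ formed by columns of $D$ with $p\in\cone U$ (after handling the degenerate lower-dimensional case by extending to a basis; in the rank-deficient case we argue within the span). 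By \cref{claim:stay-in-the-cone}, $U\in\mathcal U$, and hence $w_j\in\intcone U$ by \cref{lemma:intersection-integer-cone}.

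This alone gives a solution for $w_j$ but not a sub-solution of $x$. To get one, we use the standard pigeonhole/rearrangement argument instead. Write $x$ as a sum of $\|x\|_1$ unit columns of $D$. The coefficient of $w_j$ in $p$ is at least $t\ge$ the number of distinct columns, and this is exactly why the threshold $t\Psi$ appears in $h^+$. Using Carathéodory on the support of $x$, which has at most $t$ columns, we find a column set $U$ and nonnegative weights. The weight on some column is then large enough that a subvector $x'\sqsubseteq x$ with $Dx'=w_j$ can be carved out: $w_j=\Psi v_j$, and $\Psi$-scaling puts $w_j$ in the integer cone of the supporting basis.

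The key quantitative point is that $\lambda_j\ge t$ forces $x$ to carry enough mass in a basis $U\in\mathcal U$ to contain an integral representation of $w_j$ conformally. I expect this step to be the main obstacle. The plan is to mimic Klein's/Cslovjecsek et al.'s argument exactly, invoking \cref{claim:stay-in-the-cone} wherever they use that the intermediate right-hand side stays in the intersection of cones.

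We then set $x''=x-x'$, which solves $Dx''=p-w_j$. The vector $p-w_j$ corresponds to $z'-e_j$. Since its $j$-coefficient drops by $1$ while still remaining at least $\gamma_j$ in the induction bookkeeping ($z$ decreases in unit steps), the induction hypothesis applies to $x''$ and yields the decomposition. Adding back $x'$ as one more $w_j$-solution completes the step.

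Finally, affinity of $\mathcal D_{r,F}$ was noted above, and the stated running time follows by enumerating bases $\overline V\in\dbb d\Phi$ and computing $\overline W^{-1}$ for the one selected by $F$.
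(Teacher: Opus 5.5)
There is a genuine gap, and it sits exactly at the step you flag as the main obstacle. You never show that $x$ contains, coordinatewise, a nonnegative integral $y\le x$ with $Dy=w_j$, and the tool you reach for cannot give it. Applying Carath\'eodory to the conic combination $p=Dx$ yields a basis $U$ of columns with weights $U^{-1}p\ge0$. Those weights have no componentwise relation to $x$: a column with large weight may carry little mass in $x$, because the rest of $p$ is produced by other columns. So nothing can be ``carved out'', and the pigeonhole sentences about ``the weight on some column'' do not repair this.

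The missing idea, which is what the paper does, is to apply Carath\'eodory to the polyhedron $\{x'\ge0: Dx'=p\}$ rather than to the columns.
\begin{itemize}
\item By Minkowski--Weyl, $x=\sum_k\zeta_ku_k+\sum_k\eta_kl_k$ with vertices $u_k$ and nonnegative rays $l_k$. Carath\'eodory in $\R^t$ leaves at most $t+1$ nonzero convex weights $\zeta_k$, so some vertex $u$ satisfies $x\ge u/(t+1)$.
\item The basis $U$ of this vertex has $U^{-1}p\ge0$, so $p\in\cone U$ and \cref{claim:stay-in-the-cone} gives $U\in\mathcal U$.
\item Consequently \emph{every} generator, not only $w_j$, can be written $w_i=U\tilde y_i$ with $\tilde y_i\in\Z_{\ge0}^d$.
\item This gives $\pi(x)\ge\frac1{t+1}U^{-1}p=\sum_i\frac{\nu_i}{t+1}\tilde y_i\ge\tilde y_j$ as soon as $\nu_j\ge t+1$. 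The required $y$ is $\tilde y_j$ padded with zeros.
\end{itemize}
The relevant count is this $t+1$ from Carath\'eodory, not ``the coefficient is at least $t\ge$ the number of distinct columns''. Deferring to ``mimic the argument exactly'' leaves the heart of the claim unproved.

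There is also a bookkeeping inconsistency in your induction. You restrict the intermediate vectors to $z'\in t\Z_{\ge0}^{[\ell]\setminus S}$, but then apply the hypothesis to $z'-e_j$, which leaves that class unless $t=1$. Induct instead over all $z'\in\Z_{\ge0}^{[\ell]\setminus S}$, as the paper does with its vector $s$. Since $\gamma_j=t+\alpha_j\ge t$, the step then only needs $z'_j\ge1$ to guarantee $\nu_j=\gamma_j+z'_j\ge t+1$.
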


    \begin{claimproof}
        We prove this by showing that the following induction hypothesis holds for all $s\in\Z_{\ge0}^{[\ell]\setminus S}$: let $p=\sum_{i\in S}\alpha_i\cdot w_i+\sum_{i\in[\ell]\setminus S}(\gamma_i+s_i)\cdot w_i$. Then $\mathcal D_{r,F}(p)$ is a faithful decomposition of $p$ w.r.t.\ $D$.
    
        Note that when $s=0$, we have that $\mathcal D_{r,F}(p)=e_q=e_p$ and the claim holds. We proceed with the induction step.
        
        Now for the induction step, let $s$ and the corresponding $p$ be given. Write $\nu_i=\alpha_i$ for $i\in S$ and $\nu_i=\gamma_i+s_i$ for $i\in[\ell]\setminus S$. Note that in both cases, $\nu_i$ is strictly positive. Let $x\in\Z_{\ge0}^t$ be a solution to $Dx=p$. We show that this solution decomposes into solutions to systems with right-hand sides as described by $\mathcal D_{r,F}(p)$. Consider the polyhedron $\{Dx'=p,x'\ge0\}$. By the Minkowski-Weyl theorem, we may write
        \[
            x=\sum_j\zeta_j\cdot u_j+\sum_j\eta_j\cdot l_j
        \]
        for vertices $u_j$ and rays $l_j$ of the polyhedron, convex multipliers $\zeta_j\ge0,\sum_j\zeta_j=1$ and conic multipliers $\eta_j\ge0$. By Caratheodory's theorem, we can assume that there are at most $t+1$ nonzero multipliers $\zeta_j$. Averaging the convex multipliers shows that there must exist a $j$ with $\zeta_j\ge1/(t+1)$. As all summands in the decomposition of $x$ are nonnegative, it follows that $x\ge1/(t+1)\cdot u_j$. Let $U$ be a basis associated with $u_j$ and let $\pi$ be the projection on the corresponding basic variables, i.e., $\pi(u_j)=U^{-1}p\ge0$. Note that $p\in\cone U$ implies that the generating basis $U$ is included in $\mathcal U$ as a result of \cref{claim:stay-in-the-cone}. Now write
        \[
            p=\sum_{i\in[\ell]}\nu_i\cdot w_i=\sum_{i\in[\ell]}\nu_i\cdot U\tilde y_i
        \]
        for integral conic multipliers $\tilde y_i\in\Z_{\ge0}^d$ such that $w_i=U\tilde y_i$, which is possible as $w_i\in\cone U$ by \cref{claim:big-generators} and \cref{lemma:intersection-integer-cone} guarantees that $w_i=\Psi\cdot v_i\in\intcone U$. It follows that
        \[
            \pi(x)\ge1/(t+1)\cdot\pi(u_j)=1/(t+1)\cdot U^{-1}p=\sum_{i\in[\ell]}\tfrac{\nu_i}{t+1}\cdot\tilde y_i.
        \]
        Let $i^*\in[\ell]\setminus S$ be so that $s_{i^*}\ge1$. Then $\nu_{i^*}=s_{i^*}+t+\alpha_{i^*}\ge t+1$ and, therefore, it follows that $\pi(x)\ge\tilde y_{i^*}$. Since the columns of $U$ are present in $D$, we can extend $\tilde y_{i^*}$ with zeroes to a solution $y\in\Z_{\ge0}^t$ to $Dy=w_{i^*}$. It holds that $Dy=w_{i^*}$ and $D(x-y)=p-w_{i^*}$ with $x=y+(x-y)$ and $p=w_{i^*}+(p-w_{i^*})$ with both solutions $y$ and $x-y$ being nonnegative and integral. This corresponds to faithfully decomposing $p$ into one small vector $w_{i^*}$ and one larger vector $p'=p-w_{i^*}$. It is immediate that $p'$ can be written as the conic combination
        \[
            p'=\sum_{i\in S}\alpha_i\cdot w_i+\sum_{i\in[\ell]\setminus S}(\gamma_i+s_i')\cdot w_i'
        \]
        in $\cone V$ for multipliers that match the induction hypothesis given by $s_i'=s_i$ for all $i\in[\ell]\setminus(S\cup\{i^*\})$ except $s_{i^*}'=s_{i^*}-1\in\Z_{\ge0}$.
                
        Therefore, the induction hypothesis applies and shows that
        \[
            1\cdot e_q+\left(\sum_{i\in[\ell]\setminus S}s_i\cdot e_{w_i}\right)-1\cdot e_{w_{i^*}}
        \]
        is a faithful decomposition of $p'$. Since $x-y$ is a solution to $D(x-y)=p'$, it decomposes into solutions to the systems given by this faithful decomposition of $p'$. It is straightforward to verify that these solutions combined with the solution $y$ to $Dy=w_{i^*}=p-p'$, show that $\mathcal D_{r,F}(p)$ is indeed a valid faithful decomposition of $p$.
    \end{claimproof}

    It remains to argue about the running time of the algorithm. First, it is immediate that the hyperplanes can be constructed in time $d^{\O(1)}\cdot\O(d\Delta)^{d^4}=\O(d\Delta)^{d^4}$.
    
    Given a face $F\in\mathcal F_H$, identifying $\mathcal U$ as described in \cref{claim:generating-bases} can be done in time $d^{\O(1)}\cdot|\dbb d\Delta|=\O(\Delta)^{d^2}$.
    
    As a $\cone U=\{x\in\R^d:U^{-1}x\ge0\}$ is simplicial for $U\in\dbb d\Delta$, the intersection cone $\bigcap_{U\in\mathcal U}\cone U$ can be described with $\O(\Delta)^{d^2}$ inequalities and is pointed. To find a generating set, it suffices to naively test whether every possible intersection of $d-1$ linearly independent equalities from the inequality description is within the intersection cone and include the suitably signed spanning vector in the generating set in this case. This shows that we may obtain $P$ from \cref{lemma:intersection-integer-cone} in $d^{\O(1)}\cdot\binom{|\mathcal U|}{d-1}\cdot|\mathcal U|=d^{\O(1)}\cdot(\O(\Delta)^{d^2})^d=\O(\Delta)^{d^3}$ time and obtain $|P|\le\O(\Delta)^{d^3}$. We can then attempt to check whether $F\subseteq\cone V$ for all linearly independent sets $V$ of vectors in $P$ as suggested in \cref{claim:big-generators} to find $V$ in time $d^{\O(1)}\cdot\sum_{i=0}^d\binom{|P|}i=d^{\O(1)}\cdot d\cdot(\O(\Delta)^{d^3})^d=\O(\Delta)^{d^4}$. Extension to the basis $\overline W$ can be done in polynomial time. Finally, \cref{claim:fixed-remainder} shows that given $\overline W$, the multipliers $\alpha,\gamma$ and index set $S$ can be computed in polynomial time.
\end{proofof}

\section{Future Directions}

We presented a new slice-wise polynomial time algorithm to solve 4-block ILPs with an improved, linear, dependence of the block dimensions on the exponent of $n$. This algorithm successfully overcomes the runtime limitations of enumerating the integer vectors in an $s$-dimensional box with side length determined by the Graver complexity or proximity, and is the first to algorithm for 4-block ILPs that does not need the coefficients in the global parts to be bounded by a parameter.

The idea is partitioning the space of the global variables into $\O_{st\Delta}(n^s)$ regions, and for each such region, solving a decomposed $n$-fold ILP where the right-hand side vectors depend affinely on $x_0$. We do so by showing that faithful decompositions can be expressed as an affine function of the right-hand side vectors of the $n$-fold as long as the global variables are within a fixed region.

Despite the progress obtained in this paper, the question of whether there exists an FPT algorithm to solve 4-block ILPs (\cref{conjecture:4-block}) remains unanswered. It appears as though a different strategy must be employed to obtain such an FPT algorithm, if it were to exist: consider a 4-block ILP where $A$ is the zero matrix, $C_i=-e_{j(i)}^\top$ for some standard basis vector $e_{j(i)}$, $D_i=\begin{pmatrix}1&-1\end{pmatrix}$ and $B_i=\begin{pmatrix}a_i&0\end{pmatrix}$ for arbitrary vectors $a_i\in\{-1,0,1\}^m$. Solving such feasibility question corresponds to finding a solution to
\begin{align}
    &a_1y_1+\dots+a_ny_n=b_0,\notag\\
    &y_i\ge\max\{0,b_i+(x_0)_{j(i)}\}\ \forall i\in[n],\label{ineq:1-1}\\
    &x_0\in\Z_{\ge0}^s,y\in\Z^n\notag.
\end{align}
Here, our algorithm needs to guess whether $(x_0)_{j(i)}\le b_i$ or $(x_0)_{j(i)}<b_i$ for all $i\in[n]$, resulting in the space of the global variables being partitioned into an $s$-dimensional axis-aligned grid that may have up to $\Omega_s(n^s)$ many cells. We note that if the inequality in (\ref{ineq:1-1}) is replaced with equality, finding a solution becomes $W[1]$-hard. Roughly sketching the idea, the $s$-clique problem can be encoded letting each global variable $(x_0)_j$ pick a vertex $v\in\{1,2,\dots,|V|\}$ to be in the clique and adding many duplicate bricks to mimic an ILP with $m=\O(s^2)$ constraints and unary encoded coefficients on binary variables $z_{jv}$ indicating whether $(x_0)_j=v$, which can encode the required clique constraints as shown in~\cite{DBLP:journals/ai/DvorakEGKO21}. Note that, in standard form, (\ref{ineq:1-1}) corresponds to $y_i-z_i=b_i+(x_0)_{j(i)}$ for a slack variable $z_i\in\Z_{\ge0}$, which shows that a slack of $s$ in (\ref{ineq:1-1}) corresponds to adding the Graver basis element $g=\begin{pmatrix}1&1\end{pmatrix}$ a total of $p_g^i=s$ times in the decomposed \cref{ilp:decomposed-4-block}. Therefore, we believe that a natural next question in order to resolve \cref{conjecture:4-block} is to study if and how the convexity of the inequality in (\ref{ineq:1-1}) or the freedom of adding Graver basis elements in \cref{ilp:decomposed-4-block} can be exploited.

Finally, we note that the Graver complexity and proximity bounds from~\cite{DBLP:journals/disopt/ChenCZ22,DBLP:journals/mp/HemmeckeKW14,DBLP:journals/mp/OertelPW24} can be used to minimize separable convex objective functions as described in~\cite{DBLP:journals/mp/HemmeckeKW14}, whereas it is unclear whether the algorithm underlying \cref{theorem:uniform-4-block-xp} can similarly be generalized. Therefore, it remains an interesting open question whether one can optimize such nonlinear objective functions over 4-block integer programs in time $\O_{dmst\soverline\Delta}(n^{s+\O(1)})$.

\bibliography{bib}

@article{DBLP:journals/ai/DvorakEGKO21,
  author = {Pavel Dvořák and Eduard Eiben and Robert Ganian and Dusan Knop and Sebastian Ordyniak},
  title = {The complexity landscape of decompositional parameters for {ILP:} Programs with few global variables and constraints},
  journal = {Artificial Intelligence},
  volume = {300},
  pages = {103561},
  year = {2021},
  doi = {10.1016/j.artint.2021.103561}
}

@article{DBLP:journals/mp/HemmeckeKW14,
  author = {Raymond Hemmecke and Matthias K{\"{o}}ppe and Robert Weismantel},
  title = {Graver basis and proximity techniques for block-structured separable convex integer minimization problems},
  journal = {Mathematical Programming},
  volume = {145},
  number = {1-2},
  pages = {1--18},
  year = {2014},
  doi = {10.1007/s10107-013-0638-z}
}

@inproceedings{DBLP:conf/icalp/EisenbrandHK18,
  author = {Friedrich Eisenbrand and Christoph Hunkenschr{\"{o}}der and Kim{-}Manuel Klein},
  title = {Faster Algorithms for Integer Programs with Block Structure},
  booktitle = {Proceedings of the 45th International Colloquium on Automata, Languages, and Programming, {ICALP}},
  pages = {49:1--49:13},
  year = {2018},
  doi = {10.4230/LIPIcs.ICALP.2018.49}
}

@article{DBLP:journals/mp/CookGST86,
  author = {William J. Cook and Albertus M. H. Gerards and Alexander Schrijver and {\'{E}}va Tardos},
  title = {Sensitivity theorems in integer linear programming},
  journal = {Mathematical Programming},
  volume = {34},
  number = {3},
  pages = {251--264},
  year = {1986},
  doi = {10.1007/BF01582230}
}

@article{DBLP:journals/theoretics/CslovjecsekKLPP25,
  author = {Jana Cslovjecsek and Martin Kouteck{\'{y}} and Alexandra Lassota and Michal Pilipczuk and Adam Polak},
  title = {Parameterized algorithms for block-structured integer programs with large entries},
  journal = {TheoretiCS},
  volume = {4},
  year = {2025},
  doi = {10.46298/THEORETICS.25.15},
  note = {Previously appeared in SODA 2024}
}

@article{DBLP:journals/ior/Tardos86,
  author = {{\'{E}}va Tardos},
  title = {A Strongly Polynomial Algorithm to Solve Combinatorial Linear Programs},
  journal = {Opererations Research},
  volume = {34},
  number = {2},
  pages = {250--256},
  year = {1986},
  doi = {10.1287/OPRE.34.2.250}
}

@misc{eisenbrand2025parameterizedlinearformulationinteger,
  title = {A parameterized linear formulation of the integer hull},
  author = {Friedrich Eisenbrand and Thomas Rothvoss},
  year = {2025},
  eprint = {2501.02347},
  archiveprefix = {arXiv},
  primaryclass = {cs.CC}
}

@inproceedings{DBLP:conf/ipco/HemmeckeKW10,
  author = {Raymond Hemmecke and Matthias K{\"{o}}ppe and Robert Weismantel},
  title = {A Polynomial-Time Algorithm for Optimizing over \emph{N}-Fold 4-Block Decomposable Integer Programs},
  booktitle = {Proceedings of the 14th International Conference on Integer Programming and Combinatorial Optimization, {IPCO}},
  pages = {219--229},
  year = {2010},
  doi = {10.1007/978-3-642-13036-6\_17}
}

@article{DBLP:journals/mp/JansenKMR22,
  author = {Klaus Jansen and Kim{-}Manuel Klein and Marten Maack and Malin Rau},
  title = {Empowering the configuration-IP: new {PTAS} results for scheduling with setup times},
  journal = {Mathematical Programming},
  volume = {195},
  number = {1},
  pages = {367--401},
  year = {2022},
  doi = {10.1007/S10107-021-01694-3}
}

@inproceedings{DBLP:conf/isaac/FellowsLMRS08,
  author = {Michael R. Fellows and Daniel Lokshtanov and Neeldhara Misra and Frances A. Rosamond and Saket Saurabh},
  title = {Graph Layout Problems Parameterized by Vertex Cover},
  booktitle = {Proceedings of the 19th International Symposium on Algorithms and Computation, {ISAAC}},
  pages = {294--305},
  year = {2008},
  doi = {10.1007/978-3-540-92182-0_28}
}

@article{DBLP:journals/dam/FialaGKKK18,
  author = {Jir{\'{\i}} Fiala and Tomas Gavenciak and Dusan Knop and Martin Kouteck{\'{y}} and Jan Kratochv{\'{\i}}l},
  title = {Parameterized complexity of distance labeling and uniform channel assignment problems},
  journal = {Discrete Applied Mathematics},
  volume = {248},
  pages = {46--55},
  year = {2018},
  doi = {10.1016/j.dam.2017.02.010}
}

@article{DBLP:journals/teco/KnopKM20,
  author = {Dusan Knop and Martin Kouteck{\'{y}} and Matthias Mnich},
  title = {Voting and Bribing in Single-Exponential Time},
  journal = {{ACM} Transactions on Economics and Computation},
  volume = {8},
  number = {3},
  pages = {12:1--12:28},
  year = {2020},
  doi = {10.1145/3396855}
}

@article{bartholdi1989voting,
  title = {Voting schemes for which it can be difficult to tell who won the election},
  author = {Bartholdi, John and Tovey, Craig A and Trick, Michael A},
  journal = {Social Choice and Welfare},
  volume = {6},
  number = {2},
  pages = {157--165},
  year = {1989},
  doi = {10.1007/BF00303169}
}

@article{ermolieva2023connections,
  title = {Connections between Robust Statistical Estimation, Robust Decision-Making with Two-Stage Stochastic Optimization, and Robust Machine Learning Problems},
  author = {Ermolieva, T. and Ermoliev, Y and Havlik, P. and Lessa-Derci-Augustynczik, A. and Komendantova, N. and Kahil, T. and Balkovic, J. and Skalsky, R. and Folberth, C. and Knopov, P. S. and Wang, G.},
  journal = {Cybernetics and Systems Analysis},
  volume = {59},
  number = {3},
  pages = {385--397},
  year = {2023},
  doi = {10.1007/s10559-023-00573-3}
}

@article{DBLP:journals/disopt/GavenciakKK22,
  author = {Tom\'a\v{s} Gaven\v{c}iak and Martin Kouteck{\'{y}} and Du\v{s}an Knop},
  title = {Integer programming in parameterized complexity: Five miniatures},
  journal = {Discrete Optimization},
  volume = {44},
  number = {Part 1},
  pages = {100596},
  year = {2022},
  doi = {10.1016/j.disopt.2020.100596}
}

@inproceedings{DBLP:conf/esa/CslovjecsekEPVW21,
  author = {Jana Cslovjecsek and Friedrich Eisenbrand and Michal Pilipczuk and Moritz Venzin and Robert Weismantel},
  title = {Efficient Sequential and Parallel Algorithms for Multistage Stochastic Integer Programming Using Proximity},
  booktitle = {Proceedings of the 29th Annual European Symposium on Algorithms, {ESA}},
  volume = {204},
  pages = {33:1--33:14},
  year = {2021},
  doi = {10.4230/LIPICS.ESA.2021.33}
}

@article{DBLP:journals/siamcomp/EdelsbrunnerOS86,
  author = {Herbert Edelsbrunner and Joseph O'Rourke and Raimund Seidel},
  title = {Constructing Arrangements of Lines and Hyperplanes with Applications},
  journal = {{SIAM} Journal on Computing},
  volume = {15},
  number = {2},
  pages = {341--363},
  year = {1986},
  doi = {10.1137/0215024}
}

@article{DBLP:journals/combinatorica/FrankT87,
  author = {Andr{\'{a}}s Frank and {\'{E}}va Tardos},
  title = {An application of simultaneous Diophantine approximation in combinatorial optimization},
  journal = {Combinatorica},
  volume = {7},
  number = {1},
  pages = {49--65},
  year = {1987},
  doi = {10.1007/BF02579200}
}

@article{DBLP:journals/mor/Kannan87,
  author = {Ravi Kannan},
  title = {Minkowski's Convex Body Theorem and Integer Programming},
  journal = {Mathematics of Operations Research},
  volume = {12},
  number = {3},
  pages = {415--440},
  year = {1987},
  doi = {10.1287/MOOR.12.3.415},
}

@article{DBLP:journals/mp/OertelPW24,
  author = {Timm Oertel and Joseph Paat and Robert Weismantel},
  title = {A colorful Steinitz Lemma with application to block-structured integer programs},
  journal = {Mathematical Programming},
  volume = {204},
  number = {1},
  pages = {677--702},
  year = {2024},
  doi = {10.1007/S10107-023-01971-3}
}

@article{Hanson_1972,
  title = {On the Product of the Primes},
  volume = {15},
  number = {1},
  journal = {Canadian Mathematical Bulletin},
  author = {Hanson, Denis},
  year = {1972},
  pages = {33–-37},
  doi = {10.4153/CMB-1972-007-7}
}

@inproceedings{DBLP:conf/soda/CslovjecsekEHRW21,
  author = {Jana Cslovjecsek and Friedrich Eisenbrand and Christoph Hunkenschr{\"{o}}der and Lars Rohwedder and Robert Weismantel},
  title = {Block-Structured Integer and Linear Programming in Strongly Polynomial and Near Linear Time},
  booktitle = {Proceedings of the 2021 {ACM-SIAM} Symposium on Discrete Algorithms, {SODA}},
  pages = {1666--1681},
  year = {2021},
  doi = {10.1137/1.9781611976465.101}
}

@inproceedings{DBLP:conf/esa/0009K0S20,
  author = {Lin Chen and Martin Kouteck{\'{y}} and Lei Xu and Weidong Shi},
  title = {New Bounds on Augmenting Steps of Block-Structured Integer Programs},
  booktitle    = {Proceedings of the 28th Annual European Symposium on Algorithms, {ESA}},
  pages = {33:1--33:19},
  year = {2020},
  doi = {10.4230/LIPICS.ESA.2020.33}
}

@article{DBLP:journals/mp/Klein22,
  author = {Kim{-}Manuel Klein},
  title = {About the complexity of two-stage stochastic IPs},
  journal = {Mathematical Programming},
  volume = {192},
  number = {1},
  pages = {319--337},
  year = {2022},
  doi = {10.1007/S10107-021-01698-Z}
}

@article{DBLP:journals/disopt/ChenCZ22,
  author = {Hua Chen and Lin Chen and Guochuan Zhang},
  title = {Block-structured integer programming: Can we parameterize without the largest coefficient?},
  journal = {Discrete Optimization},
  volume = {46},
  pages = {100743},
  year = {2022},
  doi = {10.1016/J.DISOPT.2022.100743}
}

@article{DBLP:journals/scheduling/KnopK18,
  author = {Dusan Knop and Martin Kouteck{\'{y}}},
  title = {Scheduling meets n-fold integer programming},
  journal = {Journal of Scheduling},
  volume = {21},
  number = {5},
  pages = {493--503},
  year = {2018},
  doi = {10.1007/S10951-017-0550-0}
}

@article{DBLP:journals/mp/ChenCZ24,
  author = {Hua Chen and Lin Chen and Guochuan Zhang},
  title = {{FPT} algorithms for a special block-structured integer program with applications in scheduling},
  journal = {Mathematical Programming},
  volume  = {208},
  number = {1},
  pages = {463--496},
  year = {2024},
  doi = {10.1007/S10107-023-02046-Z}
}

@inproceedings{pottier1991minimal,
  author = {Lo{\"i}c Pottier},
  title = {Minimal solutions of linear diophantine systems : bounds and algorithms},
  booktitle = {Proceedings of the International Conference on Rewriting Techniques and Applications, {RTA}},
  year = {1991},
  pages = {162--173},
  doi = {10.1007/3-540-53904-2_94}
}

@book{DBLP:series/eatcs/Edelsbrunner87,
  author = {Herbert Edelsbrunner},
  title = {Algorithms in Combinatorial Geometry},
  series = {{EATCS} Monographs on Theoretical Computer Science},
  volume = {10},
  publisher = {Springer},
  year = {1987},
  doi = {10.1007/978-3-642-61568-9}
}

@article{DBLP:journals/siamcomp/EdelsbrunnerSS93,
  author = {Herbert Edelsbrunner and Raimund Seidel and Micha Sharir},
  title = {On the Zone Theorem for Hyperplane Arrangements},
  journal = {{SIAM} Journal on Computing},
  volume = {22},
  number = {2},
  pages = {418--429},
  year = {1993},
  doi = {10.1137/0222031}
}

@book{grunbaum1967convex,
  author = {Branko Gr{\"u}nbaum},
  edition = {Second},
  title = {Convex polytopes},
  series = {Graduate Texts in Mathematics},
  volume = {221},
  year = {2003},
  publisher = {Springer},
  doi = {10.1007/978-1-4613-0019-9}
}

@inproceedings{DBLP:conf/soda/BachERW25,
  author = {Eleonore Bach and Friedrich Eisenbrand and Thomas Rothvoss and Robert Weismantel},
  title = {Forall-exist statements in pseudopolynomial time},
  booktitle = {Proceedings of the 2025 {ACM-SIAM} Symposium on Discrete Algorithms, {SODA}},
  pages = {2225--2233},
  year = {2025},
  doi = {10.1137/1.9781611978322.73}
}

@inproceedings{DBLP:conf/icalp/KouteckyLO18,
  author = {Martin Kouteck{\'{y}} and Asaf Levin and Shmuel Onn},
  title = {A Parameterized Strongly Polynomial Algorithm for Block Structured Integer Programs},
  booktitle = {Proceedings of the 45th International Colloquium on Automata, Languages, and Programming, {ICALP}},
  pages = {85:1--85:14},
  year = {2018},
  doi = {10.4230/LIPICS.ICALP.2018.85}
}

\end{document}